\newcommand{\Q}{\mathbb{Q}}
\newcommand{\dualll}{\ensuremath{\mathit{dual}\mbox{-}\mathit{ll}}}
\renewcommand{\ll}{\ensuremath{\mathit{ll}}}
\newcommand{\dualpp}{\ensuremath{\mathit{dual}\mbox{-}\mathit{pp}}}
\newcommand{\pp}{\ensuremath{\mathit{pp}}}
\DeclareMathOperator{\Csp}{CSP}
\newcommand{\bA}{\mathcal{A}}
\newcommand{\bB}{\mathcal{B}}
\newcommand{\bC}{\mathcal{C}}
\newcommand{\bD}{\mathcal{D}}
\def\pa{\hbox{\textsf p}}
\def\pu{\hbox{\textsf p${}^{\smallsmile}$}}
\def\ma{\hbox{\textsf m}}
\def\mi{\hbox{\textsf m${}^{\smallsmile}$}}
\def\oa{\hbox{\textsf o}}
\def\oi{\hbox{\textsf o${}^{\smallsmile}$}}
\def\da{\hbox{\textsf d}}
\def\di{\hbox{\textsf d${}^{\smallsmile}$}}
\def\sa{\hbox{\textsf s}}
\def\si{\hbox{\textsf s${}^{\smallsmile}$}}
\def\fa{\hbox{\textsf f}}
\def\fu{\hbox{\textsf f${}^{\smallsmile}$}}
\newcommand{\xp}{\ensuremath{X^+}}
\newcommand{\xm}{\ensuremath{X^-}}
\newcommand{\yp}{\ensuremath{Y^+}}
\newcommand{\ym}{\ensuremath{Y^-}}
\newcommand{\spii}{\ \ }
\newcommand{\spiii}{\ \ \ }
\newcommand{\spiv}{\ \ \ \ }
\newcommand{\spv}{\ \ \ \ \ }
\newtheorem{lemma}{Lemma}
\newtheorem{theorem}{Theorem}
\newtheorem{corollary}{Corollary}
\title{
Classification transfer for qualitative reasoning problems
%New CSP classifications from old: pp-interpretations with pp-homotopy
%\thanks{These match the formatting instructions of IJCAI-07. The support of IJCAI, Inc. is acknowledged.}
}
\author{
Manuel Bodirsky$^1$\thanks{Manuel Bodirsky has received funding from the ERC under the European Community's Seventh Framework Programme (Grant Agreement no. 681988, CSP-Infinity), and the DFG-funded project `Homogene Strukturen, Bedingungserf\"ullungsprobleme, und topologische Klone' (Project number 622397).}, 
Peter Jonsson$^2$\thanks{Partially supported by the {\em Swedish Research Council}
(VR) under Grant 2017-04112.},
Barnaby Martin$^3$, 
Antoine Mottet$^1$\thanks{Supported by the DFG programme QuantLA.} 
\\ 
$^1$ Institut f\"ur Algebra, TU Dresden, Germany \\
$^2$ \mbox{Dept.} of Computer and Information Science, 
Link\"opings Universitet, Sweden \\
$^3$ \mbox{Dept.} of Computer Science, Durham University, UK \\
%barnabymartin@gmail.com
}
\begin{document}

\maketitle

\begin{abstract}
%We study formalisms for temporal and spatial reasoning in the modern context of Constraint Satisfaction Problems (CSPs). Questions on the complexity of their subclasses can now be solved using existing results via the powerful use of primitive positive (pp) interpretations and pp-homotopy. In the case of the Rectangle Algebra we answer in the affirmative the old open question as to whether ORD-Horn is a maximally tractable subset among the (disjunctive, binary) relations. We then generalise our results for the Rectangle Algebra to the $r$-dimensional Block Algebra.
We study formalisms for temporal and spatial reasoning in the modern, algebraic and model-theoretic, context of infinite-domain Constraint Satisfaction Problems (CSPs). We show how questions
on the complexity of their subclasses can be solved
using existing results via the powerful use of primitive positive (pp) interpretations and pp-homotopy.
We demonstrate the methodology by giving a full complexity
classification of all constraint languages that are
first-order definable in Allen's Interval Algebra and contain
the basic relations $(\sa)$ and $(\fa)$. In the case of the Rectangle Algebra
we answer in the affirmative the old open question as to whether
ORD-Horn is a maximally tractable subset among
the (disjunctive, binary) relations. We then generalise our results for the Rectangle Algebra to the
$r$-dimensional Block Algebra.

\end{abstract}

\section{Introduction}
A classical line of enquiry within Artificial Intelligence (AI) considers an agent's ability to reason about time and space, and a wide variety of formalisms for temporal and spatial reasoning are surveyed in \cite{TemporalSpatialSurvey}. Famous such examples are the Interval Algebra of \cite{Allen}, the Region Connection Calculus of \cite{Randell92aspatial} and the Cardinal Direction Calculus of \cite{LigozatCDC}. The central computational problem, designated the \emph{fundamental task of Qualitative Spatial and Temporal Reasoning} in \cite{NebelRenzSurvey}, has to do with the consistency of some partial information given in one of these formalisms. In his article on Interval Algebras, Hirsch [\citeyear{Hirsch}] outlines \emph{The Really Big Complexity Problem} associated to some subclass of a formalism, that is the (computational) complexity of its consistency problem. The complexity of the entire formalism is usually too blunt a measure, for example the Point Algebra \cite{VilainK86} has long been known to be tractable, while the Interval Algebra is NP-complete \cite{NebelFirst}. Much more interesting is a map of the complexity-theoretic landscape in terms of subclasses of the formalism. In the case of the Interval Algebra the natural classes of concern are subsets of disjunctions of the basic relations, and there are then \emph{a priori} $2^{2^{13}}$ such classes requiring a complexity classification. The tower representation produces unfortunate typography and the relevance of this number is only in its forbidding largeness, so let us replace it henceforth with a $\frownie$. This classification problem for the Interval Algebra tapped into a rich vein inspiring the popular paper of Nebel and B\"urckert [\citeyear{Nebel}], who identified a class of relations within the $\frownie$-many that is \emph{maximally tractable}, in the sense that its consistency problem is in P, yet if one adds any other relation from the $\log(\frownie)$-many the problem becomes NP-complete. This maximal class was named \emph{ORD-Horn} and owes its tractability to a variant of the local consistency method by which Horn Satisfiability is resolved in P (a very fast algorithm for the latter is given in \cite{HornSATLinear}).
% Indeed, such classes based on Horn's restriction \cite{horn1951} are ubiquitous in this field.
 A complete classification showing the 18 maximal tractable classes among the $\frownie$-many was finally given in \cite{KrokhinAllen}.

A \emph{constraint satisfaction problem} (CSP) is a computational problem in which the input consists of a finite set of variables and a finite set of constraints, and where the question is whether there exists a mapping from the variables to some fixed domain such that all the constraints are satisfied. The set of relations that is allowed to formulate the constraints in the input is called the \emph{constraint language}. 
It is well-known that the consistency problem for some subclass of a temporal or spatial formalism is an example of a CSP with a constraint language over an infinite domain. Indeed, the modern and algebraic study of CSPs has played an important role in answering some of these classical questions from AI, \mbox{e.g.} for the Interval Algebra in \cite{KrokhinAllen} (though other times, \mbox{e.g.} for the RCC-5, such questions were answered by exhaustive and computer-driven search through roughly $\frownie$-many cases \cite{RCC5JD}). 

By now the literature on infinite-domain CSPs is beginning to mature. Much of the modern work has been motivated by questions originating in AI. The constraint languages are chosen according to what mathematical methods they might be amenable to, rather than their being idiosyncratic to spatio-temporal reasoning. These are often ordinary structures of arithmetic, but may be more elaborate mathematical constructions (such as the Fra\"iss\'e limits used for RCC-8 in \cite{BodirskyW11}). This need not be a hindrance, after all the Interval Algebra can be embedded in the real line and the majority of the spatio-temporal formalisms (unsurprisingly) have a natural interpretation in some Euclidean space. The notion of definability is usually in first-order logic rather than simple disjunctions of atomic relations. Outstanding work in this direction includes the complexity classifications for the \emph{temporal CSPs}, fo-definable in $(\mathbb{Q};<)$ \cite{tcsps-journal} and \emph{discrete temporal CSPs}, fo-definable in $(\mathbb{Z};<)$ \cite{dCSPs3}.
%, together with certain extending linear program feasibility \cite{essentially-convex} and integer program feasibility \cite{JonssonLoow}. 
In these examples, a complexity-theoretic dichotomy, between P and NP-complete, is observable across the different constraint languages. Such a dichotomy is reminiscent of the case for finite-domain CSPs as detailed in the recently-proved \cite{FVproofBulatov,FVproofZhuk} Feder-Vardi Conjecture \cite{FederVardi}. While infinite-domain CSPs may have much higher complexity, \mbox{e.g.} be undecidable \cite{BodirskyGrohe}, in well-behaved cases many of the methods developed for the finite-domain are applicable. Thus, the technology exists to answer many instances of the Really Big Complexity Question. Some of the tools and techniques reappear over and again, for example tractability based on Horn's restriction \cite{horn1951} (see \cite{HornOrFull}). Yet, the proofs appearing for the Interval Algebra in \cite{KrokhinAllen} and the temporal CSPs of \cite{tcsps-journal} are quite distinct and often rely on local ad-hoc constructions. The observation that the Interval Algebra can be embedded in the (real or) rational line belies a new hope. Can a small number of existing classifications allow the simple derivation of other classifications? Could the classifications for the Interval Algebra and temporal CSPs be obtained, in some form, as a corollary of the other? The answer to this question is (a mildly-qualified) yes, so long as we can assume the basic relation $\texttt{m}$ to be in the language defined in the Interval Algebra, and $<$ to be in the language defined in $(\mathbb{Q};<)$. More interesting is to use existing classifications to simply derive solutions to open problems. In this vein, we will look at the Rectangle Algebra of \cite{Giisgen89,MukerjeeJ90} and its generalisation to the $r$-dimensional Block Algebra of \cite{BalbianiCondottaCerro-journal}.
%, the Cardinal Direction Calculus of \cite{LigozatCDC} and the Directed Interval Algebra of \cite{Renz}.

ORD-Horn is known to be a tractable fragment not only for the Interval Algebra but also for its generalisations to the (2-dimensional) Rectangle Algebra and ($r$-dimensional) Block Algebra \cite{BalbianiCondottaCerro-journal}. In that paper it is noted that:  ``The problem of the maximality of this tractable subset [ORD-Horn] remains an open problem. Usually to prove the maximality of a fragment of a relational algebra an extensive machine-generated analysis is used. Because of the huge size ... we cannot proceed in the same way.'' Our method is able to answer that ORD-Horn is indeed maximally tractable for the Rectangle Algebra as well as the $r$-dimensional Block Algebra, without a computational search, but via knowledge of maximal classes from \cite{tcsps-journal}.

This paper is structured as follows. We introduce the preliminaries as well as the key notions of interpretations and homotopy in Section~\ref{sec:2}. Our principal results on classification transfer are given in Section~\ref{sec:transfer}, while we address maximal tractability of ORD-Horn in Section~\ref{sec:ORD-Horn}. %We conclude in Section~\ref{sec:final} with some final remarks.

\section{Preliminaries}
\label{sec:2}

Let $\bB$ be a finite (relational) structure, that is a domain $B$ embellished with a finite set of relations on that domain. The sequence of arities of these relations (together with their names) constitutes the \emph{signature} of $\bB$. Of course, one may equivalently think of $\bB$ as a finite set of relations over the same domain. We may now give a formal definition of the \emph{constraint satisfaction problem} when it
is parameterised by a set of relations, equivalently a (relational) structure. 
The problem CSP$(\bB)$, where $\bB$ is known as the \emph{constraint language}, is defined as follows:

\medskip

\noindent
{\em Instance:} A set $V$ of variables and a set $C$ of {\em
  constraints} of the form
$R(v_1,\ldots,v_k)$, where $k$ is the arity of $R$, $v_1,\ldots,v_k \in V$ and $R$ a relation of $\bB$.

\noindent
{\em Question:} Is there a function $f:V \rightarrow B$ such that
$(f(v_1),\ldots,f(v_k)) \in R$ for every $R(v_1,\ldots,v_k) \in C$?
\medskip

%First-order formulas $\phi$ over
%the signature $\tau$ (or, in short, $\tau$-formulas) are inductively
%defined using the logical symbols of universal
%and existential quantification ($\forall$ and $\exists$), disjunction ($\vee$), conjunction ($\wedge$), negation ($\neg$),
%equality, bracketing, variable
%symbols and the symbols from $\tau$.
%The semantics of a first-order formula over some $\tau$-structure is
%defined in the usual Tarskian style.

First-order formulas $\phi$ over
the signature $\tau$ (or, in short, $\tau$-formulas) are inductively
defined using the logical symbols of universal
and existential quantification ($\forall$ and $\exists$), disjunction ($\vee$), conjunction ($\wedge$), negation ($\neg$),
equality, bracketing, variable
symbols and the symbols from $\tau$.
The semantics of a first-order formula over some $\tau$-structure is
defined in the usual Tarskian style. When $\phi$ is a formula without free variables (a \emph{sentence}), we write $\bB \models \phi$ to indicate that $\phi$ is true on $\bB$.

One can use first-order formulas to define relations
over a given structure $\bB$: for a formula $\phi(x_1,...,x_k)$ the corresponding relation $R$
is the set of all $k$-tuples $(t_1,...,t_k) \in B^k$ such that $\phi(t_1,...,t_k)$
is true in $\bB$. In this case we say that $R$ is first-order definable over $\bB$.
Note that our definitions are always parameter-free, \mbox{i.e.} we do not allow
the use of domain elements in them. A structure $\bA$ is said to be \emph{first-order definable} over $\bB$ when all of its relations are first-order definable in $\bB$, it is \emph{first-order expansion} of $\bB$ if it  further includes all of the relations of $\bB$.

The fragment of first-order logic called \emph{primitive positive} (pp) is that which involves only existential quantification ($\exists$) and conjunction ($\wedge$). In particular, primitive positive formulas may involve equality, but no negation. An alternative definition of CSP$(\bB)$, logspace equivalent to that which we gave, is furnished by the model-checking problem for primitive positive logic, on the fixed model $\bB$. In this formulation it is easy to see the following folkloric result.
\begin{lemma}[Theorem 3.4 in \cite{Jeavons}]
If a structure $\bA$ has a finite number of relations, all of which are pp-definable in $\bB$, then there is a polynomial-time reduction from CSP$(\bA)$ to CSP$(\bB)$.
\label{lem:folklore}
\end{lemma}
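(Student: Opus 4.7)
The plan is to give the standard syntactic reduction: translate each atomic constraint of a $\Csp(\bA)$-instance by its pp-definition over $\bB$ and take the conjunction. First I would fix, once and for all, for each of the finitely many relations $R_1,\ldots,R_n$ of $\bA$ a pp-formula $\phi_i(x_1,\ldots,x_{k_i})$ over the signature of $\bB$ such that $\bB\models \phi_i(t_1,\ldots,t_{k_i})$ iff $(t_1,\ldots,t_{k_i})\in R_i$. Each $\phi_i$ is of the form $\exists y_1\ldots y_{m_i}\,\bigwedge_{j} \psi_{i,j}$, where the $\psi_{i,j}$ are atomic $\tau(\bB)$-formulas or equalities. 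Crucially, the number $n$ and the sizes of the $\phi_i$ depend only on $\bA$ and $\bB$, and are therefore constants with respect to the input.

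Given an input $(V,C)$ of $\Csp(\bA)$, the reduction builds an instance $(V',C')$ of $\Csp(\bB)$ as follows: initialise $V':=V$, $C':=\emptyset$; then for every constraint $R_i(v_1,\ldots,v_{k_i})\in C$, take a fresh copy $\phi_i'$ of $\phi_i$ in which the existential variables are renamed to variables $y_1^c,\ldots,y_{m_i}^c$ not occurring anywhere else, substitute $v_\ell$ for $x_\ell$, drop the existential quantifiers, add the $y_j^c$ to $V'$, and add every resulting atomic conjunct to $C'$. Equality atoms $u=w$ that appear inside some $\phi_i$ are handled in the standard way, either by identifying the corresponding variables in $V'$ or by maintaining a union–find structure and quotienting at the end; this is the only mild subtlety and is where one needs to be careful (in particular if $\bB$ has no explicit equality relation).

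Correctness follows directly from the semantics of pp-definitions: if $f\colon V\to B$ satisfies $(V,C)$, then for each constraint $R_i(v_1,\ldots,v_{k_i})$ we know $(f(v_1),\ldots,f(v_{k_i}))\in R_i$, hence there exist witnesses for the existential variables of $\phi_i$, which we use to extend $f$ to $V'$; conversely, any assignment $g\colon V'\to B$ satisfying all atoms in $C'$ witnesses, by construction, that $\phi_i(g(v_1),\ldots,g(v_{k_i}))$ holds in $\bB$, and hence that the restriction $g|_V$ satisfies every constraint of $C$. For complexity, since each $\phi_i$ contributes $O(1)$ new variables and atoms per constraint, $|V'|+|C'|=O(|V|+|C|)$ and the entire construction runs in linear (a fortiori logarithmic-space) time. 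I do not foresee any real obstacle: the argument is entirely routine, and the only point that deserves attention in writing it out is the treatment of equality atoms inside the pp-formulas.
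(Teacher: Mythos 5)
Your proposal is correct and follows the same standard route the paper has in mind: replace each constraint by a fresh copy of the pp-definition of its relation with the quantifiers stripped, add the witness variables, and identify variables to discharge equality atoms. The paper itself does not spell out a proof of this lemma (it cites it as folklore, Theorem 3.4 of Jeavons, and only sketches the analogous argument in the more general pp-interpretation setting of Corollary~\ref{cor:transfer-new}), but your argument is exactly the $j=1$ specialisation of that sketch, and your explicit care over equality atoms is the right thing to flag.
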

%A \emph{homomorphism} from $\bA$ to $\bB$ over the same signature is a map $h$ from $A$ to $B$ so that for each tuple $(x_1,\ldots,x_k)$ in a relations $R$ of $\bA$, $(h(x_1),\ldots,h(x_k)$ is in the corresponding relation $R$ of $\bB$. A bijective homomorphism whose inverse is a homomorphism is called an \emph{isomorphism} and an isomorphism between a structure and itself is called an \emph{automorphism}. 
%We will only consider finite $\Gamma$ and infinite $D$ in this 
%paper. 
%Given an instance $I$ of CSP$(\Gamma)$ we write $||I||$ for the
%number of bits required to represent $I$. 

Suppose now ${\cal B}$ is a structure with a finite number of relations, each of which
has arity $m$ (this set is referred to as the {\em basic relations}).
Define ${\cal B}^{\vee =}$ to contain every $m$-ary relation $R$ such
that $R({\overline x})$ holds if and only if $B_{1}({\overline x}) \vee \dots \vee B_{t}({\overline x})$ holds,
where $\vee$ is the disjunction operator, 
$\{B_1,\dots,B_t\} \subseteq {\cal B}$, and ${\overline x}=(x_1,\ldots,x_m)$
is a variable vector. 
Clearly, ${\cal B}^{\vee =}$ is a constraint language with a finite set of relations.
When ${\cal B}$ contains binary relations, we write 
$x(B_1 \: B_2 \: \dots \: B_t)y$ instead of
$B_{1}(x,y) \vee B_2(x,y) \vee \dots \vee B_{t}(x,y)$.

%We note that the intersection of two relations $R,S \in {\cal B}^{\vee =}$ is
%itself a member of ${\cal B}^{\vee =}$: 
%if $R({\bf x}) \Leftrightarrow
%B_{i_1}({\bf x}) \vee \dots \vee B_{i_t}({\bf x})$ and
%$S({\bf x}) \Leftrightarrow
%B_{j_1}({\bf x}) \vee \dots \vee B_{j_u}({\bf x})$, then
%\[(R \cap S)({\bf x}) \Leftrightarrow \bigvee_{m \in \{i_1,\dots,i_t\} \cap \{j_1,\dots,j_u\}} B_m({\bf x}).\]

\subsection{The Interval and Block Algebras}

\begin{table}[h]
\begin{center}
     \begin{tabular}{|ll|l|l|}\hline
       Basic relation \hspace*{3mm} & \hspace*{8mm} & Example & Endpoints \hspace*{2mm}\\ 
\hline\hline
      $X$ precedes       $Y$ & $\pa$  & \texttt{XXX\spv}    & $\xp < \ym$ 
\\ \cline{1-2}
      $Y$ preceded by         $X$ & $\pu$  & \texttt{\spv YYY}   & \\ 
\hline
      $X$ meets $Y$ & $\ma$   & \texttt{XXXX\spiv}  & $\xp=\ym$ 
\\ \cline{1-2}
      $Y$ is met by        $X$ & $\mi$  & \texttt{\spiv YYYY} & \\ 
\hline
      $X$ overlaps      $Y$ & $\oa$   & \texttt{XXXX\spii} & $\xm<\ym \: \wedge$\\ 
& & \texttt{\spii YYYY} & $\ym <\xp \: \wedge$ \\ \cline{1-2}
      $Y$ overlapped by $X$ & $\oi$  & & $\xp <\yp \: \wedge$
 \\ \hline
      $X$ during        $Y$ & $\da$   & \texttt{\spiii XX \spii} & $\xm>\ym \: \wedge$ \\ \cline{1-2}
      $Y$ includes      $X$ & $\di$  & \texttt{YYYYYY}   & $\xp<\yp$ \\ 
\hline
      $X$ starts        $Y$ & $\sa$   & \texttt{XXX\spiv}  & $\xm=\ym \: \wedge$
\\ \cline{1-2}
      $Y$ started by $X$ & $\si$  & \texttt{YYYYYY}   & $\xp<\yp$ \\ 
\hline
      $X$ finishes      $Y$ & $\fa$   & \texttt{\spiii XXX} & $\xp=\yp \: \wedge$
\\ \cline{1-2}
      $Y$ finished by $X$ & $\fu$  & \texttt{YYYYYY}   & $\xm>\ym$ \\ 
\hline
      $X$ equals        $Y$ & $\equiv$ & \texttt{XXXX}      & $\xm=\ym \: \wedge$ \\ 
                                   &          & \texttt{YYYY}      & $\xp=\yp$ \\ 
\hline
    \end{tabular}
\end{center}
\caption{Basic relations in the interval algebra.} \label{tb:allen-basic-defs}
\end{table}

The {\em interval algebra}~\cite{Allen} (IA) is a formalism that is 
both well-known and well-studied in AI. 
The variable domain is 
\[{\mathbb I}=\{\{x \in {\mathbb Q} \; | \; a \leq x \leq b\} \; | \; a,b \in {\mathbb Q} \; {\rm and} \; a < b\},\]
i.e. the variable domain consists of all closed intervals $[a,b]$ of 
rational numbers.
If $I=[a,b] \in \mathbb I$, then we write $I^-$ for $a$ and $I^+$ for $b$.
The basic relations are the 13 relations defined in Table~\ref{tb:allen-basic-defs}. We let ${\cal IA}$ denote the structure that is the set of basic interval relations over $\mathbb{I}$.
Clearly, the 8192 relations
of the IA are the contents of the set ${\cal IA}^{\vee =}$. Among them is the binary relation $\top$ which holds for all pairs of intervals.

Given a sequence $S=(s_1,\dots,s_p)$, we let $S[i] = s_i$,
$1 \leq i \leq p$.
Let $p \geq 1$ be an integer. We will now define the 
$p$-{\em dimensional block algebra} (BA$_p$). 
The domain is ${\mathbb I}^{p}$. 
Given a sequence $(\mathtt{r}_1,\dots,\mathtt{r}_p)$ where $\mathtt{r}_1,\dots,\mathtt{r}_p$ are relations of ${\cal IA}$,
we define the binary relation $B_{(\mathtt{r}_1,\dots,\mathtt{r}_p)}=\{(X,Y) \in ({\mathbb I}^{p})^2  \; | \; X[i] (\mathtt{r}_i) Y[i], \; 1 \leq i \leq p\}$. 
The basic relations in BA$_p$ constitute the structure ${\cal BA}_p:=\{B_{(\mathtt{r}_1,\dots, \mathtt{r}_p)} \; | \; \mathtt{r}_1,\dots,\mathtt{r}_p \mbox{ from } {\cal IA}\}$ and the relations of
BA$_p$ are the members of ${\cal BA}_p^{\vee =}$.
We note that BA$_1$ is the interval algebra and that BA$_2$ is often referred
to as the
{\em rectangle algebra} which we denote as RA with associated structure $\mathcal{RA}$. Let us note here that, for each $p$, ${\cal BA}_p^{\vee =}$ coincides with the set of binary relations first-order definable over ${\cal BA}_p$, since this latter structure admits quantifier elimination.
%\begin{array}{ll}
%\exists Z_1,Z_2,Z_3,Z_4 & (\equiv,s)(X,Z_1) \wedge (\equiv,f)(Z_1,Z_2) \wedge \\
%& (\equiv,s)(Y,Z_3) \wedge (\equiv,f)(Z_3,Z_4) \wedge \\
%& (x,=)(Z_2,Z_4).
%\end{array}

\subsection{Interpretations}

A \emph{first-order interpretation} $I$ of a structure $\bB$ over signature $\tau$, in a structure $\bA$ over signature $\sigma$, is a triple $(k,\delta,g)$ where $k \in \mathbb{N}$ is the \emph{dimension} of the reduction, $\delta$ is a first-order definable subset of $\bA^k$ and $g$ is a surjection from $\delta$ to $B$. We further require that, for every $s$-ary relation $S$ in $\bB$, as well as for equality, there is a first-order definable $\phi(x^1_1,\ldots,x^k_1,\ldots,x^1_s,\ldots,x^k_s)$ over $\sigma$ so that, for all $(x^1_1,\ldots,x^k_1), \ldots,(x^1_s,\ldots,x^k_s)  \in \delta$:
\[
\begin{array}{c}
 \bA \models \phi(x^1_1,\ldots,x^k_1,\ldots,x^1_s,\ldots,x^k_s) \ \Leftrightarrow \\
  \bB \models S(g(x^1_1,\dots,x^k_1),\ldots,g(x^1_s,\dots,x^k_s)).
\end{array}
  \]
When $\phi$ is the first-order definition of $R$ in $\bB$, then
$S$ is the relation defined by $\phi_{I}$ in $\bA$, where $\phi_I$ is built from $\phi$ by substituting each atomic formula in the first-order definition by its translation under $I$.

Let $I$ be an $i$-ary interpretation of $\bB$ in $\bA$ denoted $(i,\delta,g)$ and $J$ be a $j$-ary interpretation of $\bC$ in $\bB$ denoted $(j,\epsilon,h)$. Then, define $J \circ (I,\ldots,I)$ ($I$ is written $j$ times), 
%also paraphrased ``$J \circ I$'',
 to be the interpretation $(ji,\gamma, f)$ defined as follows. Let $\gamma(x^1_1,\ldots,x^i_1,\ldots,x^1_j,\ldots,x^i_j)$ be defined by the conjunction of $\delta(x^1_1,\ldots,x^i_1) \wedge \ldots \wedge \delta(x^1_j,\ldots,x^i_j)$ and $\epsilon'(x^1_1,\ldots,x^i_1,\ldots,x^1_j,\ldots,x^i_j)$, where the latter is built from $\epsilon(x_1,\ldots,x_j)$ by substituting atoms (over the signature of $\bB$) by their first-order definition in $\bA$, as exists from $I$, in the obvious fashion. Note that each $x_\lambda$ becomes $i$ new variables $x^1_\lambda,\ldots,x^i_\lambda$ under the interpretation $I$.
Set $f$ to be $h(g(x^1_1,\ldots,x^i_1), \ldots,g(x^1_j,\ldots,x^i_j))$. In other words, a first-order definition of some $k$-ary atomic relation $R$ of $\bC$ in $\bA$ is furnished by taking the $ik$-ary first-order definition of $R$ in $\bB$ guaranteed by $J$ and substituting atoms in that by their first-order definition in $\bA$ as guaranteed by $I$.
Thus we can say that interpretations are transitive. We are most interested in \emph{pp-interpretations} when the first-order definitions are indeed primitive positive. Primitive positive interpretations, like general (first-order) interpretations, are also preserved under composition (note this is not true, e.g., for existential interpretations). We extend (pp-)interpretation of relations to atomic formulas in the obvious fashion.

By way of example, let us consider the binary interpretation of $\mathcal{IA}^{\vee =}$ in $(\Q;<)$, given by $I=(2,x<y,h)$, in which $h(X^-,X^+)=[X^-,X^+]$ maps an ascending pair of rationals to the interval whose endpoints they specify. It is easy to verify that each of the relations of $\mathcal{IA}^{\vee =}$ is first-order definable in $(\Q;<)$.
% As an example, $[X^-,X^+](\sa)[Y^-,Y^+]$ is given by $X^-=Y^- \wedge X^+<Y^+$. 
Indeed, each of the (basic) relations of $\mathcal{IA}$ is pp-definable in $(\Q;<)$ (see the third column of Table~\ref{tb:allen-basic-defs}). Thus $\mathcal{IA}$ is even  pp-interpretable in $(\Q;<)$.

Two interpretations of $\bC$ in $\bD$
with coordinate maps $h_1$ and $h_2$ are called 
\emph{homotopic}\footnote{We are following the terminology from~\cite{AhlbrandtZiegler}.}
if the relation $\{(\bar x,\bar y) \; | \; h_1(\bar x) = h_2(\bar y) \}$
is first-order definable in $\bD$. If this relation is even primitive positive
definable in $\bD$, we say that the two interpretations are \emph{pp-homotopic}.
The \emph{identity interpretation} of a $\tau$-structure $\bC$
 is the interpretation $I=(1,\top,h)$
of $\bC$ in $\bC$ whose coordinate map $h$ is the identity
(note that the identity interpretation is primitive positive). 

Two structures are said to be \emph{mutually pp-interpretable} when there is a pp-interpretation of each in the other. When these further satisfy a pp-homotopy condition then we will have enough for classification transfer.

Most important for our results is a novel form of composition of interpretations that we will now introduce. We previously saw how to compose two interpretations $J$ and $I$ through $J\circ(I,\ldots,I)$ but imagine now we have not one $I$ but several. So suppose instead that we have $j$ different $i$-ary interpretations 
$I_1:=(i,\delta_1,g_1)$, \ldots, $I_j:=(i,\delta_j,g_j)$, of $\bB$ in $\bA$; together with $J:=(j,\epsilon,h)$, a $j$-ary interpretation of $\bC$ in $\bB$.
Then we can define also $J \circ (I_1,\ldots,I_j)$ in the following fashion, which is indeed consistent with our previous definition of $J \circ (I,\ldots,I)$ in the case $I=I_1=\cdots=I_j$.
Consider the partial $ij$-ary surjection $h(g_1,\ldots,g_j)$, defined through $h(g_1(x_1,\ldots,x_i),\ldots,g_j(x_{ij-i+1},\ldots,x_{ij}))$, from $A^{ij}$ to $C$. Note that it is defined precisely on $ij$-tuples satisfying $\delta_1(x_1,\ldots,x_i) \wedge \ldots \wedge \delta_j(x_{ij-i+1},\ldots,x_{ij}) \wedge \epsilon(g_1(x_1,\ldots,x_i),\ldots,g_j(x_{ij-i+1},\ldots,x_{ij}))$. Indeed, interpretations may be seen as partial surjections and partial surjections are closed under the composition we have just seen. The partial surjection $h(g_1,\ldots,g_j)$ bestows a natural map from atomic relations of $\bC$ to relations of $\bA$ but there is no guarantee of nice (pp-, or even fo-) definability in $\bA$. Even the definition we have given for the domain of $h(g_1,\ldots,g_j)$ is not syntactic in the sense of fo-definability. However, let us now consider the case $\bC=\bA$ in which $h(g_1,\ldots,g_j)$ is itself pp-definable, in the sense that the $ij+1$-ary relation on $\bA$ given by $h(g_1(x_1,\ldots,x_i),\ldots,g_j(x_{ij-i+1},\ldots,x_{ij}))=y$ is pp-definable. This is exactly $J \circ (I_1,\ldots,I_j)$ being pp-homotopic with the identity interpretation (what we may paraphrase in future as simply ``pp-homotopy'', when the context is clear. In this case, it is clear that domain and relations, for $J \circ (I_1,\ldots,I_j)$, are pp-definable. For example, the domain formula is $\exists y \ h(g_1(x_1,\ldots,x_i),\ldots,g_j(x_{ij-i+1},\ldots,x_{ij}))=y$ and a $k$-ary relation $R$ of $\bA$ gives rise to its interpreting $ijk$-ary relation of $\bA$ given by 
\[
\begin{array}{l}
\exists y^1,\ldots,y^k R(y^1,\ldots,y^k) \wedge \\
h(g_1(x^1_1,\ldots,x^1_i),\ldots,g_j(x^1_{ij-i+1},\ldots,x^1_{ij}))=y^1 \wedge \\
\vdots \\
h(g_1(x^k_1,\ldots,x^k_i),\ldots,g_j(x^k_{ij-i+1},\ldots,x^k_{ij}))=y^k \\
\end{array}
\]
where here subscripts are read lexicographically before superscripts.

As an example, let us consider the binary interpretation $J$ of $\mathcal{RA}$ in $\mathcal{IA}$, given by $J=(2,\top,h)$, in which $h(X,Y)=(X,Y)$ maps an ordered pair of intervals to the rectangle they specify in the plane. Consider further the two unary interpretations $I_1$ and $I_2$, given by $(1,\top,p_1)$ and $I_2:=(1,\top,p_2)$, respectively, where $p_1$ and $p_2$ are the binary projections to horizontal and vertical axes. According to our definitions, we are at liberty to consider any of the interpretations $J \circ (I_1, I_1)$, $J \circ (I_2, I_2)$ and $J \circ (I_1, I_2)$. The important observation for us is that only the third of these is pp-homotopic to the identity interpretation. The first two are not even fo-homotopic to the identity. All the results in this paper require a similar form.

\section{Classification Transfer}
\label{sec:transfer}
Let $\bC$ be a structure with finite relational signature. By the
\emph{classification project for $\bC$} we mean a complexity
classification for $\Csp(\bB)$ for all first-order 
expansions $\bB$ of $\bC$ that have finite relational signature. For instance, the classification project for the random graph is treated in \cite{BodPin-Schaefer} and the classification project for
$(\Q;<)$ is treated in \cite{tcsps-journal}.

Sometimes, it is possible to derive the (complexity) classification
project for a structure $\bC$ from the (complexity) classification project 
for another structure $\bD$. The following lemma is our principal tool and gives us the method by which we can demonstrate classification transfer.

\begin{lemma}
\label{lem:transfer-new}
Suppose $\bD$ has $j$ $i$-ary primitive positive interpretations $I_1,\ldots,I_j$ in $\bC$,
and $\bC$ has a $j$-ary primitive positive interpretation $J$ in $\bD$ such that
 $J \circ (I_1,\ldots,I_j)$ is pp-homotopic to the identity interpretation of $\bC$.
Then for every first-order expansion $\bC'$ of $\bC$
there is a first-order expansion $\bD'$ of $\bD$
such that $\bC'$ and $\bD'$ are mutually pp-interpretable.
\end{lemma}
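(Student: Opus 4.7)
The plan is to construct an explicit first-order expansion $\bD'$ of $\bD$ from $\bC'$ and then to exhibit pp-interpretations in both directions: $J$ will witness $\bC' \leq_{pp} \bD'$, while a careful combination of $I_1, \ldots, I_j$ together with the pp-homotopy will witness $\bD' \leq_{pp} \bC'$.

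First, we define $\bD'$ by adjoining, for each $k$-ary relation $R$ of $\bC'$, a new relation $R^{\#}$ of arity $jk$ on $D$ given by
\[ R^{\#}(\bar d_1, \ldots, \bar d_k) \iff \bigwedge_{m=1}^{k} \epsilon(\bar d_m) \wedge R\bigl(h(\bar d_1), \ldots, h(\bar d_k)\bigr), \]
where each $\bar d_m \in D^j$ and $\epsilon, h$ denote the domain and coordinate map of $J$. Since $R$ is first-order definable in $\bC$ and $J$ is a first-order interpretation of $\bC$ in $\bD$, the relation $R^{\#}$ is first-order definable in $\bD$; hence $\bD'$ is a first-order expansion of $\bD$. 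The direction $\bC' \leq_{pp} \bD'$ is then straightforward: $J$ already pp-interprets $\bC$ in $\bD \subseteq \bD'$, and each new $k$-ary relation $R$ of $\bC'$ is pp-interpreted in $\bD'$ via the atomic pp-formula $R^{\#}(\bar x_1, \ldots, \bar x_k) \wedge \bigwedge_{m} \epsilon(\bar x_m)$.

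For the direction $\bD' \leq_{pp} \bC'$, we exploit the pp-homotopy, which provides a pp-formula $\Psi(\bar x^1, \ldots, \bar x^j, y)$ in $\bC$ equivalent to $h(g_1(\bar x^1), \ldots, g_j(\bar x^j)) = y$. The plan is to build a pp-interpretation $K$ of $\bD'$ in $\bC'$ of dimension $ij$, constructed from $I_1, \ldots, I_j$, whose domain and coordinate map are arranged so that a single element of $\bD$ is represented by an $ij$-tuple $(\bar x^1, \ldots, \bar x^j)$ over which the values of the $g_\ell$'s cohere; the pp-homotopy allows this coherence to be pp-defined in $\bC$. Under such an interpretation, the pullback of each new $R^{\#}$ becomes equivalent to
\[ \exists y_1, \ldots, y_k\, R(y_1, \ldots, y_k) \wedge \bigwedge_{m=1}^{k} \Psi(\bar x^1_m, \ldots, \bar x^j_m, y_m), \]
which is pp in $\bC'$ because $R$ is a relation of $\bC'$ and $\Psi$ is pp in $\bC$.

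The hard part will be arranging $K$ so that the pullbacks of every relation of $\bD'$---the original $\bD$-relations, handled via the individual $I_\ell$'s, and simultaneously the new $R^{\#}$-relations, handled via $\Psi$---all yield pp-formulas in $\bC'$. This will force a careful choice of dimension, domain, and coordinate map for $K$, and it is the specific pp-homotopy for $J \circ (I_1, \ldots, I_j)$ (rather than for any other composition of the $I_\ell$'s) that should make it possible: the mixed pattern of $g_\ell$'s in $\Psi$ is exactly what is needed to line up the $j$ coordinates of each block of arguments of $R^{\#}$.
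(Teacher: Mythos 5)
Your construction of $\bD'$ and your argument for $\bC' \leq_{pp} \bD'$ match the paper essentially exactly, so the first half is fine. The problem is the direction $\bD' \leq_{pp} \bC'$, which you explicitly defer as ``the hard part''; this is where the entire substance of the lemma lies, and the route you sketch will not close. First, there is a dimension mismatch: $R^{\#}$ has arity $jk$, so under an $ij$-dimensional interpretation $K$ its defining formula should have $ij \cdot jk$ free variables, not the $ijk$ appearing in your display (which would instead be appropriate for an $i$-dimensional interpretation of a $jk$-ary relation). Second, and more seriously, your plan is to pp-define the coherence condition $g_1(\bar x^1) = \dots = g_j(\bar x^j)$ and restrict $K$'s domain to coherent tuples, claiming that ``the pp-homotopy allows this coherence to be pp-defined.'' It does not: the hypothesis gives pp-definability of the $(ij+1)$-ary relation $h(g_1(\bar x^1),\ldots,g_j(\bar x^j)) = y$, which is a different relation from pairwise agreement of the $g_\ell$'s. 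In the paper's own running application to the Interval Algebra (Lemma~\ref{lem:IA1}), coherence would require $X^- = Y^+$, i.e.\ pp-defining $\mathtt{m}$ from $\{\sa,\fa\}$, which the paper explicitly remarks is impossible; so any argument routed through a pp-definable coherence domain is doomed.

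The paper's actual proof takes a different tack: it does not restrict to coherent tuples at all. It claims that each of the $i$-dimensional interpretations $(i,U_\mu,g_\mu)$ already pp-interprets $\bD'$ in $\bC'$, and for a new relation $S$ of arity $jk'$ it writes the interpreting formula $\exists x^1,\dots,x^{k'}\,\bigl(R(x^1,\dots,x^{k'}) \wedge \bigwedge_{\ell=1}^{k'}\theta(x^\ell,x^\ell_1,\dots,x^\ell_{ij})\bigr)$, with $\theta$ your $\Psi$; here each argument-group of $S$ is fed through $\theta$ so that its $j$ sub-blocks are decoded by $g_1,\ldots,g_j$ in round-robin order, and the key observation is that ``$\theta$ defines equality''. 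You should work out carefully why this single-interpretation reading is sound (the slots of $S$ end up decoded by different $g_\mu$'s, which is not the literal semantics of a pullback under one $I_\mu$); reproducing that verification, rather than appealing to coherence, is exactly the missing step in your proposal.
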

\begin{proof}
Let $I_1 = (i,U_1,g_1)$, \ldots,  $I_j = (i,U_j,g_j)$ and $J=(j,V,h)$ be the primitive positive interpretations from the statement,
and let $\bC'$ be 
a first-order expansion of $\bC$.
Then we set $\bD'$ to be the expansion of $\bD$
that contains for every $k$-ary atomic formula (derived from atomic relation $R$) in the signature of $\bC'$ 
the $jk$-ary relation $S$ defined as follows. 
When $\phi$ is the first-order definition of $R$ in $\bC$, then
$S$ is the relation defined by $\phi_{J}$ in $\bD$ (recall that $\phi_J$ is built from $\phi$ by substituting each atomic formula in the first-order definition by its translation under $J$).

We claim that $\bC'$ is primitive positive interpretable in $\bD'$, indeed by $(j,V,h)$. First note that $V$ is primitive positive definable
in $\bD'$ since $\bD'$ is an expansion of $\bD$.
An atomic formula $\psi$ with free variables $x_1,\dots,x_k$
in the signature of $\bC'$ can be interpreted in $\bD'$ as follows. We replace the relation
symbol in $\psi$ by its definition in $\bC$, and obtain
a formula $\phi$ in the language of $\bC$.
Let $S$ be the symbol in the language of $\bD'$ for the relation
defined by $\phi_{J}(x_1^1,\dots,x_1^j,\dots,x^1_k,\dots,x^j_k)$
 over $\bD'$. Then indeed $S(x_1^1,\dots,x_1^j,\dots,x^1_k,\dots,x^j_k)$ is a defining formula for $\psi$, because 
\[
\begin{array}{c}
\bC' \models \psi(h(a_1^1,\dots,a_1^j),\dots,h(a^1_k,\dots,a^j_k)) \Leftrightarrow \\ 
\bD' \models S(a_1^1,\dots,a_1^j,\dots,a^1_k,\dots,a^j_k)
\end{array}
\]
for all $a_1,\dots,a_k \in V$.

Conversely, we claim that $\bD'$ has each of the primitive positive interpretations  $(i,U_1,g_1)$, \ldots,  $(i,U_j,g_j)$  in $\bC'$. 
As before, each of $U_1,\ldots,U_j$ is primitive positive definable in $\bC'$ since 
$\bC'$ is an expansion of $\bC$. 
Let $\phi$ be a $k$-ary atomic formula in the (relational) signature of $\bD'$. If $\phi$ is already in the signature of $\bD$, then there is a primitive positive interpreting formula in $\bC$ and therefore also in $\bC'$.
Otherwise, by the definition of $\bD'$, the relation
symbol in $\phi$ has arity $jk'$, and has been introduced for a ($k\leq$) $k'$-ary relation $R$ from $\bC'$ (if $k<k'$ some coordinates were identified in $R$). We have to find a defining formula
having $ijk$ free variables, but we will build one with $ijk'$ variables in which some have to be identified in line with our previous comment. 
Let $\theta(x_0,x_{1},\dots,x_{ij})$ be the primitive positive formula of
arity $ij+1$ that shows 
that $h(g_1(x_{1},\ldots,x_i),\dots,g_j(x_{ij-i+1},\ldots,x_{ij})) = x_0$ is primitive positive definable in $\bC$.
Then the defining formula for the atomic formula $\phi(x^1,\ldots,x^{k'})$, where some of these variables are identified, is
$\exists x^1,\dots,x^{k'} \; 
\big(R(x^1,\dots,x^{k'}) \; \wedge \;  \bigwedge_{\ell=1}^{k'} \theta(x^\ell,x^\ell_{1},\dots,x^\ell_{ij})\big )$,
where here subscripts are read lexicographically before superscripts. To see that this defines $\phi$, it is sufficient to note that $\theta$ defines equality.
\end{proof}
%At first sight, this new version of Lemma~\ref{lem:transfer} is unnecessarily more complicated that the original. However, the point is that $J \circ (I_i,\ldots,I_i)$ may not be pp-homotopic, for any $i \in [m]$, while  $J \circ (I_1,\ldots,I_m)$ is easily seen to be pp-homotopic.
%The following corollary is the cornerstone of classification transfer. It follows immediately from Lemmas~\ref{lem:folklore} and \ref{lem:transfer-new}.
\begin{corollary}
\label{cor:transfer-new}
Suppose $\bD$ has $j$ $i$-ary primitive positive interpretations $I_1,\ldots,I_j$ in $\bC$,
and $\bC$ has a $j$-ary primitive positive interpretation $J$ in $\bD$ such that
 $J \circ (I_1,\ldots,I_j)$ is pp-homotopic to the identity interpretation of $\bC$.
Then for every first-order expansion $\bC'$ of $\bC$
there is a first-order expansion $\bD'$ of $\bD$
such that there is are polynomial time reductions between CSP$(\bC')$ and CSP$(\bD')$.
\end{corollary}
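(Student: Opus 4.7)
The plan is to obtain Corollary~\ref{cor:transfer-new} as an almost immediate consequence of Lemma~\ref{lem:transfer-new}, by upgrading the mutual pp-interpretability it produces into mutual polynomial-time reducibility between the CSPs. First I invoke Lemma~\ref{lem:transfer-new} on the hypothesis, which exhibits a first-order expansion $\bD'$ of $\bD$ that is mutually pp-interpretable with $\bC'$: there is a $j$-ary pp-interpretation of $\bC'$ in $\bD'$ and $i$-ary pp-interpretations $I_1,\ldots,I_j$ of $\bD'$ in $\bC'$. The remaining task is then purely one of translating pp-interpretability into polynomial-time reducibility between CSPs, in each direction.

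For this I rely on the standard strengthening of Lemma~\ref{lem:folklore} from pp-definability to pp-interpretability. The idea is the usual one: given an instance of $\Csp(\bC')$ with variable set $V$, I build an instance of $\Csp(\bD')$ whose variables are $V\times\{1,\ldots,j\}$ (so that each $v\in V$ is replaced by a $j$-tuple $\bar v=(v^1,\ldots,v^j)$ representing its preimage under the coordinate map $h$); for every constraint $R(v_1,\ldots,v_k)$ of the input I introduce the pp-formula $\phi_J$ interpreting $R$ in $\bD'$, applied to the corresponding tuples $\bar v_1,\ldots,\bar v_k$, and I also add the pp-formula for the domain $V$ on each $\bar v$ and the pp-formula defining equality in the interpretation between any two $\bar v_\ell,\bar v_{\ell'}$ that refer to the same input variable (this last point is exactly where one uses that pp-interpretations interpret equality by a pp-formula, which is what makes the argument go through for pp- rather than just fo-interpretations). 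The resulting instance has size polynomial in the input. Correctness is immediate from the definition of interpretation and the fact that existentially quantified variables in a pp-formula can be Skolemised into fresh CSP variables. The symmetric construction in the opposite direction uses the interpretations $I_1,\ldots,I_j$, each applied in the analogous way, which gives the reduction from $\Csp(\bD')$ to $\Csp(\bC')$.

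The only mild obstacle is that the excerpt has stated Lemma~\ref{lem:folklore} solely in its pp-definability form, whereas the corollary genuinely needs the pp-interpretability version. This is a well-known fact (and goes through exactly as sketched above), so I would simply state it as a routine extension of the cited lemma or cite it explicitly. Once that is in place, applying it twice—once for the pp-interpretation of $\bC'$ in $\bD'$ and once for the pp-interpretation of $\bD'$ in $\bC'$—yields the two polynomial-time reductions asserted by the corollary, concluding the proof.
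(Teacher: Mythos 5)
Your proposal is correct and follows essentially the same route as the paper: invoke Lemma~\ref{lem:transfer-new} to obtain $\bD'$ mutually pp-interpretable with $\bC'$, then upgrade pp-interpretability to polynomial-time CSP reductions by replacing each variable with a $j$-tuple, conjoining the domain formula, and substituting each atom by its pp-interpreting formula, exactly as the paper does in its generalisation of Lemma~\ref{lem:folklore}. One small inaccuracy: since you index the new tuples by the original variable set $V\times\{1,\ldots,j\}$, repeated occurrences of a variable automatically share the same tuple, so the explicit equality constraints you add are redundant (and the pp-definability of equality is not what distinguishes the pp- from the fo- case here; rather it is that the domain and relation formulas are themselves pp, so the resulting sentence is again a CSP instance).
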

\begin{proof}
Lemma~\ref{lem:transfer-new} tells us that for every first-order expansion $\bC'$ of $\bC$
there is a first-order expansion $\bD'$ of $\bD$
such that $\bC'$ and $\bD'$ are mutually pp-interpretable. Thus it is now incumbent on us only to argue that there are polynomial time reductions between CSP$(\bC')$ and CSP$(\bD')$. This essentially generalises the proof of (the hitherto unproved) Lemma~\ref{lem:folklore} and is most easily given in the alternative definition of the CSP as the model-checking problem for primitive positive logic. Let $J'=(j,U,g)$ be a pp-interpretation of $\bC'$ in $\bD'$ where $U$ is given as an $j$-ary pp-formula $\phi_U$ and each $k$-ary relation $R$ of $\bC'$ is given as a $jk$-ary formula $\phi_R$. From an instance of CSP$(\bC')$, that is a pp-sentence in $n$ variables, we build a pp-sentence in $jn$ variables, in which each variable $v$ becomes a sequence $v^1,\ldots,v^j$ that is constrained by $\phi_U$, and where each atom of the form $R(v_1,\ldots,v_k)$ is replaced by $\phi_R(v^1_1,\ldots,v^j_1,\ldots,v^1_k,\ldots,v^j_k)$. This procedure can clearly be made in polynomial time and that it is indeed a reduction follows from it being derived from the corresponding interpretation. The argument for reducing CSP$(\bD')$ to CSP$(\bC')$ is dual, and the result follows.
\end{proof}

\subsection{The Interval Algebra}

We now investigate classification transfer for the Interval Algebra.
\begin{lemma}
There are 2 unary pp-interpretations, $I_1$ and $I_2$, of $(\mathbb{Q};<)$ in $(\mathbb{I},\mathtt{s},\mathtt{f})$, and a binary pp-interpretation $J$ of $(\mathbb{I},\mathtt{s},\mathtt{f})$ in $(\mathbb{Q};<)$, so that $J \circ (I_1,I_2)$ is pp-homotopic to the identity.
\label{lem:IA1}
\end{lemma}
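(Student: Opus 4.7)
The strategy is to use the two endpoint projections in one direction and the pair-to-interval map in the other. Set
\[ I_i := (1,\top,g_i) \text{ for } i \in \{1,2\}, \qquad J := (2,\delta_J, h), \]
where $g_1([a,b]) := a$, $g_2([a,b]) := b$, $\delta_J(a,b) \equiv (a < b)$, and $h(a,b) := [a,b]$. All three coordinate maps are surjective onto their targets. Verifying that $J$ is a pp-interpretation of $(\mathbb{I},\sa,\fa)$ in $(\Q;<)$ is routine: each defining condition in Table~\ref{tb:allen-basic-defs} for $\sa$, $\fa$, and equality of intervals is already pp over $(\Q;<)$.

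The substance of the proof lies in pp-defining certain relations in $(\mathbb{I},\sa,\fa)$. First I would exhibit ``same left endpoint'' and its dual,
\[ \mathrm{sL}(X,Y) \;\equiv\; \exists Z.\, X\,\sa\,Z \wedge Y\,\sa\,Z, \qquad \mathrm{sR}(X,Y) \;\equiv\; \exists Z.\, X\,\fa\,Z \wedge Y\,\fa\,Z, \]
and verify them. The forward implication of $\mathrm{sL}$ is immediate from $X^- = Z^- = Y^-$; for the reverse, given $X^- = Y^-$, take $Z$ left-aligned with $X,Y$ and strictly longer than both, e.g.\ $Z := [X^-, \max(X^+,Y^+)+1]$. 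The case of $\mathrm{sR}$ is dual.

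I expect the main obstacle to be the pp-definability of the strict order $X^- < Y^-$ (and dually $X^+ < Y^+$), since neither $\sa$ nor $\fa$ alone can compare unaligned endpoints. My proposal bridges $X$ and $Y$ by a short chain of auxiliary intervals:
\[ \mathrm{Less}^-(X,Y) \;\equiv\; \exists W, Z.\, \mathrm{sL}(X, W) \wedge (Z\,\fa\,W) \wedge \mathrm{sL}(Z, Y). \]
The ``$\Rightarrow$'' direction is a direct calculation: $W^- = X^-$, $Z^- > W^-$, and $Z^- = Y^-$, hence $X^- < Y^-$. For ``$\Leftarrow$'', given $X^- < Y^-$, pick any rational $M > Y^-$ and take $W := [X^-, M]$ and $Z := [Y^-, M]$. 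A symmetric formula $\mathrm{Less}^+$, obtained by swapping $\sa \leftrightarrow \fa$ and $\mathrm{sL} \leftrightarrow \mathrm{sR}$, pp-defines $X^+ < Y^+$.

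With these pp-formulas at hand, $I_1$ and $I_2$ are pp-interpretations of $(\Q;<)$: equality mod $g_i$ is $\mathrm{sL}$ or $\mathrm{sR}$, and $<$ mod $g_i$ is $\mathrm{Less}^-$ or $\mathrm{Less}^+$. For the pp-homotopy condition, the composed coordinate map of $J\circ(I_1,I_2)$ sends $(X_1, X_2) \mapsto [g_1(X_1), g_2(X_2)] = [X_1^-, X_2^+]$, so the required pp-formula in $(\mathbb{I},\sa,\fa)$ asserting that this equals $Y$ is simply
\[ \mathrm{sL}(X_1, Y) \wedge \mathrm{sR}(X_2, Y), \]
which is pp by the preceding construction, completing the verification.
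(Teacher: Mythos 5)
Your proposal is correct and follows essentially the same approach as the paper's own proof: the same two projection interpretations $I_1,I_2$, the same pairing $J$, $<$ modulo each projection pp-defined by a short chain through auxiliary intervals, and pp-homotopy witnessed by aligning endpoints with the target interval. One small slip: the literal swap $\sa\leftrightarrow\fa$, $\mathrm{sL}\leftrightarrow\mathrm{sR}$ applied to $\mathrm{Less}^-$ yields $\exists W,Z\,\bigl(\mathrm{sR}(X,W)\wedge Z\,(\sa)\,W\wedge\mathrm{sR}(Z,Y)\bigr)$, whose chain reads $Y^+=Z^+<W^+=X^+$, so it pp-defines $X^+>Y^+$ rather than $X^+<Y^+$; one must in addition swap $X$ with $Y$ (as the paper's $(X<Y)_{I_2}$ does) or, equivalently, flip the middle atom to $W\,(\sa)\,Z$. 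Worth noting in your favour: your homotopy formula $\mathrm{sL}(X_1,Y)\wedge\mathrm{sR}(X_2,Y)$ captures precisely $\{(X_1,X_2,Y):Y^-=X_1^-,\ Y^+=X_2^+\}$, whereas the paper's $X\,(\sa)\,Z\wedge Y\,(\fa)\,Z$ additionally imposes $X^+<Z^+$ and $Y^->Z^-$ and so is strictly narrower than the graph of the composed coordinate map; your version is the one that actually meets the stated pp-homotopy requirement.
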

\begin{proof}
Let $I_1$ and $I_2$ be $(1,\top,p_1)$ and $(1,\top,p_2)$, where $p_1$ and $p_2$ are the binary projections to start and finish point of the interval, respectively. Let $J$ be $(2,X^-< X^+,h)$ where $h(X^-,X^+)=[X^-,X^+]$ maps pairs of points $X^- < X^+$ to the interval they specify in the line. 

Define $(X<Y)_{I_1}$ and $(X<Y)_{I_2}$ by $\exists Y',W \; \big(Y (\sa) Y' \wedge X (\sa) W \wedge Y' (\fa) W \big)$ and $\exists X',W \; \big(X (\fa) X' \wedge Y (\fa) W \wedge X' (\sa) W \big)$, respectively.
Define $((u_1,u_2)(\sa)(v_1,v_2))_J$ by $u_1=v_1 \wedge u_2<v_2$ and $((u_1,u_2)(\fa)(v_1,v_2))_J$ by $u_1<v_1 \wedge u_2=v_2$. 

Define $(X=Y)_{I_1}$ as $X(\sa)Y$ and  $(X=Y)_{I_2}$ as $X(\fa)Y$. Define $((u_1,u_2)(\equiv)(v_1,v_2))_J$ as $u_1=v_1 \wedge u_2=v_2$. Now $J \circ (I_1,I_2)$ maps $([X^-,X^+],[Y^-,Y^+])$ to $[Z^-,Z^+]:=[X^-,Y^+]$ and pp-homotopy is given by $X(\sa)Z \wedge Y(\fa)Z$.
%Define $x<y$ by $\exists w \; \big(\mathtt{s}(x,w) \wedge \mathtt{f}(y,w)\big)$
\end{proof}
Let us make the simple observation that all the basic relations of $\mathcal{IA}$ are pp-definable with $<$. Now we can derive the following from Corollary~\ref{cor:transfer-new}.
\begin{corollary}
Let $\Gamma$ be first-order definable in $\mathcal{IA}$ containing the basic relations $\sa$ and $\fa$. Then either CSP$(\Gamma)$ is in P or it is NP-complete.
\label{cor:IA2}
\end{corollary}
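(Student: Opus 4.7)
The plan is to chain the setup of Lemma~\ref{lem:IA1}, the transfer machinery of Corollary~\ref{cor:transfer-new}, and the known complexity dichotomy for first-order expansions of $(\Q;<)$ from \cite{tcsps-journal}. The first step is to recast $\Gamma$ as a first-order expansion of $\bC := (\mathbb{I},\sa,\fa)$ rather than of $\mathcal{IA}$. Since $\Gamma$ by assumption already contains $\sa$ and $\fa$, this boils down to checking that every basic relation of $\mathcal{IA}$ is first-order definable in $\bC$. Using the formulas $(X<Y)_{I_1}$ and $(X<Y)_{I_2}$ exhibited in the proof of Lemma~\ref{lem:IA1}, one obtains pp-definitions of the orderings on left and right endpoints, and together with the equality-type formulas $X \sa Y \vee Y \sa X \vee X = Y$ and $X \fa Y \vee Y \fa X \vee X = Y$ one can mimic inside $\bC$ any quantifier-free definition over $(\Q;<)$ of a relation in $\mathcal{IA}$; combined with the quantifier elimination enjoyed by $(\Q;<)$, this yields first-order definitions in $\bC$ of all 13 basic relations, and hence of every first-order definable relation in $\mathcal{IA}$, including those in $\Gamma$.

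Once $\Gamma$ is recognized as a first-order expansion of $\bC$, the second step is a direct invocation of Corollary~\ref{cor:transfer-new}, with $\bC = (\mathbb{I},\sa,\fa)$ and $\bD = (\Q;<)$: the hypotheses of that corollary are exactly the content of Lemma~\ref{lem:IA1}. This produces a first-order expansion $\Delta$ of $(\Q;<)$, with finite relational signature, such that $\mathrm{CSP}(\Gamma)$ and $\mathrm{CSP}(\Delta)$ are polynomial-time equivalent.

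The third step is to invoke the temporal-CSP dichotomy of \cite{tcsps-journal}, which classifies every first-order expansion of $(\Q;<)$ as either tractable (in P) or NP-complete. Applying this to $\Delta$ and transporting the outcome back along the polynomial-time equivalence of the previous step yields the same dichotomy for $\mathrm{CSP}(\Gamma)$, which is the desired conclusion.

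The only step that is not purely bureaucratic is the first: ensuring that $(\mathbb{I},\sa,\fa)$ really is fo-expansively equivalent to $\mathcal{IA}$, so that ``first-order definable in $\mathcal{IA}$'' coincides with ``first-order definable in $(\mathbb{I},\sa,\fa)$'' for any language containing $\sa$ and $\fa$. I expect this to be the main obstacle in spirit, because the converse direction (expressing $\sa,\fa$ in $\mathcal{IA}$) is trivial, whereas one must explicitly recover the endpoint order from just $\sa$ and $\fa$; fortunately the formulas $(X<Y)_{I_1}$, $(X<Y)_{I_2}$ from Lemma~\ref{lem:IA1} do precisely this, so the obstacle dissolves as soon as one extracts them from the interpretation and combines them with quantifier elimination over $(\Q;<)$.
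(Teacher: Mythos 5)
Your overall architecture matches the paper: use Lemma~\ref{lem:IA1} to supply the interpretations, apply Corollary~\ref{cor:transfer-new} with $\bC=(\mathbb{I},\sa,\fa)$ and $\bD=(\Q;<)$, and finish with the dichotomy for first-order expansions of $(\Q;<)$ from \cite{tcsps-journal}. You also correctly isolate the one non-bureaucratic step: showing that every relation first-order definable over $\mathcal{IA}$ is already first-order definable over $(\mathbb{I},\sa,\fa)$, so that $\Gamma$ qualifies as a first-order expansion of the latter as required by Corollary~\ref{cor:transfer-new}.

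Your argument for that step, however, has a concrete gap. The four formulas in your inventory define only the \emph{same-type} endpoint comparisons $X^-<Y^-$, $X^+<Y^+$, $X^-=Y^-$, and $X^+=Y^+$. But the quantifier-free definitions over $(\Q;<)$ of the basic relations $\pa$ (which is $X^+<Y^-$), $\ma$ (which is $X^+=Y^-$), and $\oa$ (which involves $Y^-<X^+$) require \emph{cross-type} comparisons between a left endpoint of one interval and a right endpoint of the other, and these do not appear in your list. Quantifier elimination over $(\Q;<)$ only normalizes the target formula; it does not supply definitions over $(\mathbb{I},\sa,\fa)$ for atoms you have not yet expressed. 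The gap is fillable by existentially quantifying over an auxiliary interval and exploiting the implicit non-degeneracy $Z^-<Z^+$: for instance $X^-<Y^+ \Leftrightarrow \exists Z\,\bigl(\neg(Z^-<X^-)\wedge\neg(Y^+<Z^+)\bigr)$, taking $Z=[X^-,Y^+]$ for the forward direction; then $Y^-<X^+$, $X^+<Y^-$, $X^+=Y^-$, and $X^-=Y^+$ are obtained by further Boolean combinations and auxiliary quantification in the same spirit. Only once all six endpoint-order atoms over $\{X^-,X^+,Y^-,Y^+\}$ are so expressed does the atom-by-atom translation of quantifier-free definitions, and hence the rest of your proof, go through.
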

Noting that $\sa$ and $\fa$ are pp-definable in $\ma$, though the converse is false, we note that we could have derived the previous corollary with $\ma$ in place of $\sa$ and $\fa$. In that case, it would have been possible to use binary and unary interpretations, $J$ and $I$, respectively, so that $J \circ (I,I)$ would be pp-homotopic to the identity interpretation. For our stronger result, with $\sa$ and $\fa$, such a composition seems unlikely.

Corollary~\ref{cor:IA2} should be compared to the main result in \cite{KrokhinAllen}. They are formally incomparable, since Corollary \ref{cor:IA2} requires certain basic relations to be present, while \cite{KrokhinAllen} only considers binary first-order definable relations.

\subsection{The Rectangle Algebra}

We now investigate classification transfer for the Rectangle Algebra.
We will use the obvious fact that certain relations are pp-definable from the basic relations. Let $(\mathtt{s},\top):=(\mathtt{s},\mathtt{p}) \vee \ldots \vee (\mathtt{s},\equiv)$ be the disjunction with all 13 basic relations relations of the IA in the second coordinate. Then $X(\sa,\top)Y$ iff $\exists Z \left( X(\sa,\pa)Z \wedge Z(\sa,\pu)Y \right)$. Note also that, e.g., $(\sa,\top)$ is pp-definable in $\mathcal{RA}$ by $\exists Z \left( X  (\equiv,\da) W \wedge Y (\equiv,\da) W \wedge X (\sa,\equiv) W\right)$.
\begin{lemma}
There are 4 unary pp-interpretations, $I_1,I_2,I_3$ and $I_4$, of $(\Q;<)$ in $\mathcal{RA}$, and a $4$-ary pp-interpretation $J$ of $\mathcal{RA}$ in $(\Q;<)$, so that $J \circ (I_1,I_2,I_3,I_4)$ is pp-homotopic to the identity.
\label{lem:RA-in-Q}
\end{lemma}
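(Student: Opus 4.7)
The plan is to mirror the construction of Lemma~\ref{lem:IA1} for the Interval Algebra, treating each rectangle as a pair of intervals, hence as a quadruple of rationals. First, I would set $J = (4, \delta, h)$ with $\delta(x_1,x_2,x_3,x_4) := x_1 < x_2 \wedge x_3 < x_4$ and coordinate map $h(x_1,x_2,x_3,x_4) = ([x_1,x_2],[x_3,x_4])$, sending an admissible quadruple of rationals to the rectangle it specifies. Each basic relation $B_{(\mathtt{r}_1,\mathtt{r}_2)}$ of $\mathcal{RA}$ is then pp-defined in $(\Q;<)$ as the conjunction of the pp-definition of $\mathtt{r}_1$ applied to the x-coordinates $(x_1,x_2,y_1,y_2)$ and the pp-definition of $\mathtt{r}_2$ applied to the y-coordinates $(x_3,x_4,y_3,y_4)$; these definitions are routine from the endpoint descriptions in Table~\ref{tb:allen-basic-defs}.

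Next, I would set $I_k = (1,\top, p_k)$ for $k=1,2,3,4$, where $p_1,p_2,p_3,p_4$ project a rectangle to its left x-endpoint, right x-endpoint, bottom y-endpoint, and top y-endpoint, respectively. For each $I_k$ I need pp-definitions in $\mathcal{RA}$ of $<$ (and, analogously, of equality) whose pullback under $p_k$ coincides with $<$ (respectively, $=$) on $\Q$. I would adapt the IA formulas of Lemma~\ref{lem:IA1} by replacing the basic relations $\sa$ and $\fa$ with their lifted RA versions: $(\sa,\top)$ and $(\fa,\top)$ when working with $p_1$ or $p_2$, and $(\top,\sa)$ and $(\top,\fa)$ when working with $p_3$ or $p_4$. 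The discussion preceding the statement already establishes that these lifted relations are pp-definable in $\mathcal{RA}$ (e.g.\ $(\sa,\top)$ is $\exists W\ (X(\equiv,\da)W \wedge Y(\equiv,\da)W \wedge X(\sa,\equiv)W)$). After substitution, for instance, $(X <_{I_1} Y)$ becomes $\exists Y',W\ \big(Y(\sa,\top)Y' \wedge X(\sa,\top)W \wedge Y'(\fa,\top)W\big)$, which a direct check shows holds iff $p_1(X) < p_1(Y)$.

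Finally, $J \circ (I_1,I_2,I_3,I_4)$ sends a tuple $(X_1,X_2,X_3,X_4)$ to the rectangle $[p_1(X_1),p_2(X_2)] \times [p_3(X_3),p_4(X_4)]$, defined when $p_1(X_1) < p_2(X_2)$ and $p_3(X_3) < p_4(X_4)$; pp-homotopy with the identity then requires a pp-formula in $\mathcal{RA}$ capturing "$Z$ equals this rectangle". Following the IA template, I would take the conjunction $X_1(\sa,\top)Z \wedge X_2(\fa,\top)Z \wedge X_3(\top,\sa)Z \wedge X_4(\top,\fa)Z$, which pins the four boundary coordinates of $Z$ to the corresponding projections of the $X_k$. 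The main obstacle I anticipate is bookkeeping around the "silent" axis in each $(\mathtt{r},\top)$ or $(\top,\mathtt{r})$: one must verify that the auxiliary rectangles existentially quantified in the transcribed formulas can always be produced without the unconstrained axis creating an obstruction. Because the basic relations of $\mathcal{RA}$ factor cleanly along the two axes and Lemma~\ref{lem:IA1} already supplies the per-axis template, the verification should reduce to applying that lemma twice, once in each coordinate direction.
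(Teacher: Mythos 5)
Your construction matches the paper's own proof essentially verbatim: the same $J=(4,\,x_1<x_2\wedge x_3<x_4,\,h)$, the same four projection interpretations $I_k$, the same lifted relations $(\sa,\top),(\fa,\top),(\top,\sa),(\top,\fa)$ to transcribe the Lemma~\ref{lem:IA1} formulas for $<$ and $=$, and the same pp-homotopy witness $X_1(\sa,\top)Z \wedge X_2(\fa,\top)Z \wedge X_3(\top,\sa)Z \wedge X_4(\top,\fa)Z$. The ``silent axis'' concern you raise is indeed harmless, as the $\top$-component imposes no constraint and the auxiliary rectangles can always be completed freely on that axis, reducing the verification to two independent applications of the IA argument as you note.
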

\begin{proof}
For $i \in [4]$, let $I_i$ be $(1,\top,p_i)$, where $p_i$ is the $4$-ary projection to the $i$th coordinate. Let $J$ be $(4,X^-<X^+\wedge Y^-<Y^+,h)$ where $h(X^-,X^+,Y^-,Y^+)=r$ maps two pairs of intervals to the rectangle they specify in the plane.

Define  $(X<Y)_{I_i}$ as follows.

\medskip
\hspace{-0.5cm}
\resizebox{!}{0.85cm}{
$
\begin{array}{l}
(X<Y)_{I_1} \mbox{ by }  \exists Y',W \big(Y (\sa,\top) Y' \wedge X (\sa,\top) W \wedge Y' (\fa,\top) W \big) \\
(X<Y)_{I_2} \mbox{ by }  \exists X',W \big(X (\fa,\top) X' \wedge Y (\fa,\top) W \wedge X' (\sa,\top) W \big)
 \\
(X<Y)_{I_1} \mbox{ by }  \exists Y',W \big(Y (\top,\sa) Y' \wedge X (\top,\sa) W \wedge Y' (\top,\fa) W \big) \\
(X<Y)_{I_2} \mbox{ by }  \exists X',W \big(X (\top,\fa) X' \wedge Y (\top,\fa) W \wedge X' (\top,\sa) W \big)
 \\
\end{array}
$
}
\medskip

\noindent Define $((u_1,u_2,u_3,u_4)(\sa,\top)(v_1,v_2,v_3,v_4))_J$ by $u_1=v_1 \wedge u_2<v_2$ and $((u_1,u_2,u_3,u_4)(\fa,\top)(v_1,v_2,v_3,v_4))_J$ by $u_1<v_1 \wedge u_2=v_2$. The other relations of $\mathcal{RA}$ are defined in the obvious fashion.
%$(\top,\sa)$ and $(\top,\fa)$ are defined in a similar fashion, as are all the other basic relations of $\mathcal{RA}$. 

Define $(X=Y)_{I_1}$, $(X=Y)_{I_2}$, $(X=Y)_{I_3}$ and $(X=Y)_{I_4}$ as $X(\sa,\top)Y$, $X(\fa,\top)Y$, $X(\top,\sa)Y$ and $X(\top,\fa)Y$, respectively. Define $((u_1,u_2,u_3,u_4)(\equiv)(v_1,v_2,v_3,v_4))_J$ as $u_1=v_1 \wedge u_2=v_2 \wedge u_3=v_3 \wedge u_4=v_4$. Now, for $i\in \{1,2,3,4\}$, let $W_i=(X_i,Y_i)$ where $X_i$ and $Y_i$ are intervals and not rectangles. $J \circ (I_1,I_2,I_3,I_4)$ maps 
\[ 
\begin{array}{l}
([X_1^-,X_1^+],[Y_1^-,Y_1^+]),([X_2^-,X_2^+],[Y_2^-,Y_2^+]), \\
\ \ \ \ \ ([X_3^-,X_3^+],[Y_3^-,Y_3^+]),([X_4^-,X_4^+],[Y_4^-,Y_4^+])
\end{array}
\]
to
$Z:=([X_1^-,X_2^+],[Y_3^-,Y_4^+])$ and pp-homotopy is given by $W_1(\sa,\top)Z \wedge W_2(\fa,\top)Z \wedge W_3(\top,\sa)Z \wedge W_4(\top,\fa)Z$.
\end{proof}
Now we can derive the following from Corollary~\ref{cor:transfer-new}.
\begin{corollary}
Let $\Gamma$ be first-order definable in $\mathcal{RA}$ containing the basic relations  $(\sa,\pa)$, $(\sa,\pu)$, $(\fa,\pa)$, $(\fa,\pu)$, $(\pa,\sa)$, $(\pu,\sa)$, $(\pa,\fa)$ and $(\pu,\fa)$. Then either CSP$(\Gamma)$ is in P or it is NP-complete.
\label{cor:RA}
\end{corollary}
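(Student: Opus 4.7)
I plan to apply Corollary~\ref{cor:transfer-new} with $\bD = (\Q;<)$ and with $\bC = \bM$, where $\bM$ is the reduct of $\mathcal{RA}$ to the eight basic relations listed in the statement, taking $\bC' = \Gamma$; the Bodirsky--K\'ara dichotomy for first-order expansions of $(\Q;<)$ from~\cite{tcsps-journal} will then transfer to $\Gamma$ to finish the argument.

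The first step is to verify the hypotheses of Corollary~\ref{cor:transfer-new} for $\bC = \bM$. Lemma~\ref{lem:RA-in-Q} already supplies the interpretations $I_1,\ldots,I_4$ and $J$ together with the pp-homotopy witness. Its pp-formulas use only the auxiliary relations $(\sa,\top)$, $(\fa,\top)$, $(\top,\sa)$, $(\top,\fa)$, each of which is pp-definable in $\bM$ by the existential trick recalled immediately before Lemma~\ref{lem:RA-in-Q} (for instance $X(\sa,\top)Y \Leftrightarrow \exists Z\; X(\sa,\pa)Z \wedge Z(\sa,\pu)Y$). The eight basic relations of $\bM$ are themselves among the basic relations of $\mathcal{RA}$ and hence pp-definable in $(\Q;<)$ via the coordinate map $h$. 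Consequently $I_1,\ldots,I_4$ and $J$ are pp-interpretations between $\bM$ and $(\Q;<)$, and the homotopy witness $W_1(\sa,\top)Z \wedge W_2(\fa,\top)Z \wedge W_3(\top,\sa)Z \wedge W_4(\top,\fa)Z$ is pp in $\bM$.

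The second step is to show that $\Gamma$ is a first-order expansion of $\bM$. It contains the eight relations of $\bM$ by hypothesis, so it suffices to check that $\mathcal{RA}$ itself is first-order definable in $\bM$. The formulas $(X<Y)_{I_1},\ldots,(X<Y)_{I_4}$ of Lemma~\ref{lem:RA-in-Q} pp-define in $\bM$ the four coordinatewise strict orders on the endpoints of pairs of rectangles, while the four relations $(\sa,\top)$, $(\fa,\top)$, $(\top,\sa)$, $(\top,\fa)$ pp-define the corresponding coordinatewise equalities. These suffice to force every automorphism of $\bM$ on $\mathbb{I}^2$ to act as an independent order-automorphism of $(\Q;<)$ on each coordinate axis, which is exactly the automorphism group of $\mathcal{RA}$; since both structures are $\omega$-categorical, Ryll-Nardzewski then yields that the same relations are first-order definable in $\bM$ as in $\mathcal{RA}$.

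Assembling these pieces, Corollary~\ref{cor:transfer-new} produces a first-order expansion $\bD'$ of $(\Q;<)$ with $\Csp(\Gamma)$ and $\Csp(\bD')$ polynomial-time equivalent, and the dichotomy theorem of~\cite{tcsps-journal} places $\Csp(\bD')$, and therefore $\Csp(\Gamma)$, either in P or in NP-complete. I expect the main obstacle to be the definability argument in the second step: one has to check that the eight listed relations are rich enough to force the full $\mathcal{RA}$-automorphism group on $\mathbb{I}^2$, which is what guarantees that those particular eight relations---and not a strictly smaller subset---suffice for the classification transfer.
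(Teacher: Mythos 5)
Your proposal is correct and matches the paper's (implicit, one-line) derivation: apply Corollary~\ref{cor:transfer-new} with $\bD=(\Q;<)$ and $\bC$ the reduct of $\mathcal{RA}$ to the eight listed relations, using the interpretations of Lemma~\ref{lem:RA-in-Q} (which, as you note, only need $(\sa,\top),(\fa,\top),(\top,\sa),(\top,\fa)$ and hence are pp over that reduct), and then invoke the Bodirsky--K\'ara dichotomy. The one point you flag as a potential obstacle---that $\Gamma$ is genuinely a first-order expansion of the eight-relation reduct---is real but routine: your Ryll-Nardzewski argument works, and even more directly, the pp-definable coordinate-wise orders and equalities $X_i^{\pm}<Y_i^{\pm}$ and $X_i^{\pm}=Y_i^{\pm}$ (which you already exhibit) together with Boolean combinations fo-define every basic relation of $\mathcal{RA}$, so no appeal to $\omega$-categoricity is actually needed.
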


\section{Maximal tractability and ORD-Horn}
\label{sec:ORD-Horn}

A set of relations $\Gamma$, over the same domain, is described as \emph{maximally tractable} among $\Delta \supseteq \Gamma$, if every finite subset of $\Gamma$ gives a structure $\bB$ so that CSP$(\bB)$ is in P; while for each $R \in \Delta\setminus \Gamma$, there is a finite subset of $\Gamma$ given by $\bB$, so that CSP$(\bB;R)$ is NP-hard.
%In \cite{BalbianiCondottaCerro}, the maximal tractability of the so-called strongly pre-convex relations, among $\mathcal{RA}^{\vee =}$, is left open. 

\subsection{The Rectangle Algebra} 

In \cite{BalbianiCondottaCerro-journal}, the strongly pre-convex relations are tied precisely to ORD-Horn, yet maximal tractability of this class, among $\mathcal{RA}^{\vee =}$, remains open. Here we settle the question by proving that ORD-Horn is indeed maximally tractable in the Rectangle Algebra. We will do this by using maximal tractability among languages first-order definable in $(\mathbb{Q};<)$ (where in fact ORD-Horn is not maximally tractable).

A formula is called \emph{ll-Horn} if it is a conjunction of formulas
of the following form %(slightly abusing terminology, we call 
%these formulas the \emph{clauses} of the ll-Horn formula)
\begin{align*}
(x_1 = y_1 \wedge \dots \wedge x_k = y_k)  \Rightarrow & (z_1 < z_0\vee \dots \vee z_l < z_0)
\; \hbox{, or}\\
(x_1 = y_1 \wedge \dots \wedge x_k = y_k)  \Rightarrow & (z_1 < z_0 \vee \dots \vee z_l < z_0 \vee \\ 
& (z_0 = z_1 = \dots = z_l)) 
\end{align*}
where $0 \leq k,l$. ORD-Horn is the subclass in which we insist at most a single atom appears in each sequent. Note that $k$ or $l$ might be $0$: if $k=0$, we obtain
a formula of the form $z_1 < z_0 \vee \dots \vee z_l < z_0$ or $(z_1 < z_0 \vee \dots \vee z_l < z_0 \vee (z_0 = z_1 = \dots = z_l))$,
and if $l=0$ we obtain a disjunction of disequalities.
Also note
that the variables $x_1,\dots,x_k,y_1,\dots,y_k,z_0,$
$\dots,z_l$ need not be pairwise distinct. 
On the other hand, the clause $z_1 < z_2 \vee z_3 < z_4$ is an example of a formula that is \emph{not} ll-Horn. The dual class, \dualll-Horn, can be defined as \ll-Horn, but with all $<$ replaced by $>$. The following result is from \cite{tcsps-journal} using the characterisation of \cite{ll}.
\begin{lemma}
The class of relations definable in \ll-Horn (resp., \dualll-Horn) is a maximally tractable subclass of relations fo-definable in $(\Q;<)$.
\end{lemma}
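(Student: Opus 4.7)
The claim has two parts: a \emph{tractability} part (any finite language of \ll-Horn relations yields a CSP in P) and a \emph{maximality} part (any relation fo-definable in $(\Q;<)$ but not \ll-Horn pushes the CSP up to NP-hardness when added to some finite tractable fragment). The plan is to handle the two parts by different techniques, the first algorithmic and the second algebraic, and to appeal to the characterisation of \ll-Horn from the companion paper via a polynomial operation preserving exactly the \ll-Horn relations. The dual case for \dualll-Horn follows by the obvious order-reversing symmetry of $(\Q;<)$.

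For tractability, I would give an explicit propagation algorithm in the spirit of Horn-SAT lifted to dense linear orders. Given an input conjunction $\Phi$ of atomic formulas $R(v_{i_1},\ldots,v_{i_k})$, each $R$ is a conjunction of \ll-Horn clauses over the variables, and expanding gives a conjunction of clauses $\sigma \Rightarrow (\zeta_1 < z_0 \vee \cdots \vee \zeta_l < z_0 \,[\vee \zeta_0 = \cdots = \zeta_l])$ with $\sigma$ an equality conjunction. I would first close under equality propagation (unifying variables forced equal), then iteratively fire any clause whose antecedent is entailed by the current partial information; such firings either shrink a disjunction to a single atom (which can be adopted) or detect inconsistency by closing a cycle of strict inequalities. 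Each round runs in polynomial time, and one argues termination and correctness using the facts that (i) $(\Q;<)$ is homogeneous so satisfiability of the residual constraints reduces to acyclicity of the accumulated $<$-atoms on top of the equality quotient, and (ii) the shape of \ll-Horn clauses guarantees that any satisfiable residual survives the propagation without branching.

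For maximality, the plan is algebraic. The paper \cite{ll} characterises \ll-Horn as exactly those $(\Q;<)$-fo-definable relations preserved by a particular binary operation $\ell\ell$ on $\Q$ (an ``ll'' operation, roughly a lexicographic combination that breaks ties in favour of the first argument). Let $R$ be an fo-definable relation over $(\Q;<)$ that is not \ll-Horn; then $R$ is not preserved by $\ell\ell$, and the standard witness construction produces tuples $\bar a, \bar b \in R$ with $\ell\ell(\bar a,\bar b) \notin R$. Using these tuples as parameters and combining $R$ with a small finite fragment $\Gamma_0 \subseteq \ll$-Horn that contains $<$, $\neq$, and a few useful Horn clauses, I would pp-define a relation known to make the temporal CSP NP-hard, for instance the Betweenness relation $\{(x,y,z) : x<y<z \vee z<y<x\}$ or one of the other minimal NP-hard cores in the Bodirsky--Kara dichotomy \cite{tcsps-journal}. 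The dichotomy theorem then gives NP-hardness of $\Csp(\Gamma_0 \cup \{R\})$.

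The main obstacle is the second step: the pp-definition of a concretely hard temporal relation from an arbitrary $R$ not preserved by $\ell\ell$. A direct combinatorial reduction is fiddly; the clean way is to quote the full polymorphism-based classification of temporal CSPs from \cite{tcsps-journal}, which already enumerates the maximal tractable clones, and to verify that the clone $\mathrm{Pol}(\ll\text{-Horn})$ generated by $\ell\ell$ is one of them. Given that classification, maximality of \ll-Horn is immediate, and the only genuine content one has to check in detail is the algorithmic tractability of \ll-Horn itself, together with the precise polymorphism characterisation taken from \cite{ll}.
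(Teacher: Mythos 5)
The paper gives no proof of this lemma at all: it is stated with the one-line attribution ``The following result is from \cite{tcsps-journal} using the characterisation of \cite{ll},'' i.e.\ the whole content is a citation to the Bodirsky--K\'ara temporal-CSP classification together with the \ll-Horn $\leftrightarrow$ polymorphism characterisation from~\cite{ll}. Your proposal eventually converges to exactly that: after sketching a propagation algorithm and a reduction to a hard temporal relation, you yourself conclude that ``the clean way is to quote the full polymorphism-based classification of temporal CSPs from \cite{tcsps-journal}\ldots\ Given that classification, maximality of \ll-Horn is immediate.'' That is the paper's proof. So in substance you land in the same place.

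Two cautions on the surrounding sketch. First, your informal description of the operation as ``a lexicographic combination that breaks ties in favour of the first argument'' is not an accurate picture of $\ll$; the operation used in~\cite{tcsps-journal,ll} is a binary operation on $\Q$ defined piecewise via a cut at $0$ (closely related in spirit to $\pp$ and $\dualpp$, whose definitions do appear in the paper's proof of the next lemma), not a plain lexicographic product. Since you only invoke it as a black box this does not break your argument, but it would mislead a reader. Second, for the tractability half you do not actually need a fresh propagation algorithm: the P-membership of the $\ll$-closed languages is also part of the cited classification (and is precisely the subject of~\cite{ll}), so once you decide to quote~\cite{tcsps-journal} and~\cite{ll} there is nothing left to prove locally. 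In short: your final paragraph is the proof; the earlier sketches are dispensable scaffolding, and the $\ll$ description should be corrected or omitted.
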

\begin{lemma}
The relation $x=y \Rightarrow u=v$ sits in precisely two maximally tractable classes of constraint languages with respect to relations fo-definable in $(\Q;<)$, which are those whose relations are definable in \ll-Horn and those  whose relations are definable in \dualll-Horn.
\label{lem:little}
\end{lemma}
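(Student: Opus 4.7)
The proof divides into two tasks: (i) showing that $R(x,y,u,v) := (x = y \Rightarrow u = v)$ lies in both of the two named classes, and (ii) showing it lies in no other maximally tractable class. The overall plan is to handle (i) syntactically and (ii) by appeal to the temporal CSP classification of \cite{tcsps-journal}, refuting each other candidate polymorphism in turn.

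For (i) I would exhibit $R$ as a conjunction of two \ll-Horn clauses, using the \emph{second} template in the definition (where the consequent may contain an additional atom $z_0 = z_1 = \dots = z_l$). Taking $k = l = 1$, the clauses $(x=y) \Rightarrow (u<v \vee u=v)$, with $z_0=v,\ z_1=u$, and $(x=y) \Rightarrow (v<u \vee u=v)$, with $z_0=u,\ z_1=v$, are both \ll-Horn. Their conjunction is equivalent to $(x=y) \Rightarrow (u \leq v \wedge v \leq u)$, i.e.\ to $R$. Replacing $<$ by $>$ in both clauses yields a \dualll-Horn definition of $R$, so $R$ lies in both classes.

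For (ii) I would invoke the classification in \cite{tcsps-journal}: every maximally tractable class of relations first-order definable in $(\Q;<)$ is the invariant class of one of a short finite list of canonical polymorphisms, consisting of \ll, \dualll, $\min$, $\max$, the $\pp$ and $\dualpp$ operations, and the mi, mx operations. It then suffices to exhibit, for each polymorphism $f$ in the list besides \ll\ and \dualll, two tuples in $R$ whose coordinatewise $f$-image is outside $R$. For $f = \min$ the tuples $(0,1,0,1)$ and $(1,0,0,1)$ are both in $R$ (their first two coordinates differ, so the implication is vacuous), yet $\min$ sends them pointwise to $(0,0,0,1)$, where $x = y = 0$ but $u = 0 \neq 1 = v$, violating $R$; a mirror-image witness works for $\max$, and similar two-row witnesses (using three values $0,1,2$ where needed to exercise the tie-breaking behaviour) will dispose of mi, mx, $\pp$, and $\dualpp$.

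The genuine content here is the appeal to the Bodirsky--K\'ara classification, which is what reduces the problem from ``consider every tractable $\Gamma \supseteq \{R\}$'' to a finite case split over explicit polymorphisms. Once that external result is in hand, each individual refutation is a short two-row verification; the only real care needed is in respecting the precise tie-breaking conventions of the less symmetric polymorphisms mi, mx, $\pp$, $\dualpp$ so that the chosen witness tuples genuinely do lie in $R$ while their image does not.
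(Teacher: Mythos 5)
Your overall plan matches the paper's: both invoke the Bodirsky--K\'ara classification and then rule out candidate polymorphisms by exhibiting violating pairs of tuples. However, there is a meaningful difference in how the case analysis is organised, and also one place where the paper does work you leave implicit. The paper reads off from Figure~9 of \cite{tcsps-journal} that, because $x=y\Rightarrow u=v$ is already known to be ORD-Horn and hence in both \ll-Horn and \dualll-Horn, the \emph{only} two operations one needs to refute are $\pp$ and $\dualpp$ --- the remaining tractable classes (those tied to $\min$, $\max$, mi, mx) are handled automatically by the structure of that figure. You instead propose to check each of $\min$, $\max$, mi, mx, $\pp$, $\dualpp$ separately, which is not wrong but is more case work than necessary, and --- more importantly --- your worked example is the $\min$ case, precisely the one the paper treats as redundant, while you leave the pivotal $\pp$ and $\dualpp$ cases as ``similar two-row witnesses.'' Those are the cases the paper actually carries out (e.g.\ for $\pp$ the witnesses are $(-1,-1,2,2)$ and $(1,2,1,2)$, giving $(-1,-1,1,2)$, with $-1<0$ used to trigger the first branch of $\pp$); without spelling them out you have not actually done the part of the argument the paper considers load-bearing. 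Your part~(i), exhibiting explicit \ll-Horn and \dualll-Horn definitions of $R$ as a conjunction of two clauses using the second template with $k=l=1$, is correct and is a useful explicit check that the paper leaves implicit (it simply relies on $R$ being ORD-Horn). One caution: be careful when asserting that the maximally tractable classes are \emph{exactly} the invariant classes of the eight operations you list --- the classification is slightly more subtle (some of those invariance classes are contained in others), and the safe formulation is that every maximal tractable class is $\mathrm{Inv}(f)$ for some $f$ on the list, which is what your argument actually needs.
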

\begin{proof}
Here we can not avoid some parlance from \cite{tcsps-journal}. Let $\pp$ (resp., $\dualpp$) be the binary operation on $\Q$ that maps $(x,y)$ to $x$, if $x<0$, and $y$, otherwise (resp., maps $(x,y)$ to $y$, if $x<0$, and $x$, otherwise). A relation is \emph{violated} by an operation if, when the operation is applied component-wise to some tuples in the relation, one can obtain a tuple that is not in the relation. Reading from Figure~9 in  \cite{tcsps-journal}, one sees that the present lemma follows if we can prove that $\pp$ and $\dualpp$ both \emph{violate} $x=y \Rightarrow u=v$. To see this for $\pp$, consider the tuples $(-1,-1,2,2)$ and $(1,2,1,2)$ for which $\pp$ produces the tuple $(-1,-1,1,2)$ which is not in the relation. The case $\dualpp$ is achingly similar.
\end{proof}
\noindent Call a definition in \ll-Horn \emph{minimal} if all of its clauses are needed and none can be replaced by one with a smaller sequent on the right-hand side of an implication. We may thus assume that the $z_0,\ldots,z_l$ as in the definition are distinct in each clause. Note that relations definable in ORD-Horn over $(\Q;<)$ is a strict subset of both those relations definable in \ll-Horn and those  relations definable in \dualll-Horn.
\begin{theorem}
The class ORD-Horn on the Rectangle Algebra is maximally tractable among the binary relations fo-definable in $\mathcal{RA}$. %Moreover, ORD-Horn is the unique maximally tractable class containing the basic relations.
\label{prop:RA}
\end{theorem}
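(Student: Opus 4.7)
The plan is to establish tractability and maximality of ORD-Horn on the Rectangle Algebra separately. Tractability of every finite subset $\Gamma$ of ORD-Horn relations is the content of \cite{BalbianiCondottaCerro-journal}, so $\Csp(\Gamma) \in \cP$. The real work lies in the maximality direction.

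For maximality, I would fix an arbitrary binary $R \in \mathcal{RA}^{\vee =}$ outside ORD-Horn and aim to show that $\Csp(\Gamma_0 \cup \{R\})$ is NP-hard for some finite $\Gamma_0$ of ORD-Horn relations containing the basic relations listed in Corollary~\ref{cor:RA}. The strategy is to invoke the classification transfer of Corollary~\ref{cor:transfer-new} via the interpretations of Lemma~\ref{lem:RA-in-Q}, which reduces the question to an analogous NP-hardness question for a first-order expansion of $(\Q;<)$ in endpoint coordinates. Crucially, ORD-Horn relations on the Rectangle Algebra map under this translation into ORD-Horn relations on $(\Q;<)$, and hence into the intersection of the two maximally tractable classes \ll-Horn and \dualll-Horn from \cite{tcsps-journal}.

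Next, I would exhibit an ORD-Horn RA relation $R_0$ whose endpoint translation realizes $x=y \Rightarrow u=v$: take $R_0$ to be the disjunction of basic RA relations $(r_1,r_2)$ satisfying $r_1 \in \{\sa, \si, \equiv\} \Rightarrow r_2 \in \{\sa, \si, \equiv\}$, i.e., the set of rectangle pairs with $X_1^- = X_2^- \Rightarrow Y_1^- = Y_2^-$. This is itself a single ORD-Horn clause in endpoint variables; existentially projecting out the unused endpoint variables (using $<$ and the density of $\Q$) gives the 4-ary relation $x=y \Rightarrow u=v$ of Lemma~\ref{lem:little} on $(\Q;<)$. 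The key technical claim is that for any binary $R$ outside ORD-Horn on RA, its endpoint translation $R'$ lies outside both \ll-Horn and \dualll-Horn. Granting this, Lemma~\ref{lem:little} applies to the $(\Q;<)$-language built from $R'$, the translation of $R_0$, and the basics: tractability would force containment in \ll-Horn or \dualll-Horn, both ruled out by the presence of $R'$. Hence the $(\Q;<)$-CSP is NP-hard, and classification transfer returns NP-hardness to $\Csp(\Gamma_0 \cup \{R\})$.

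The main obstacle is the key claim above. Because $R$ is not ORD-Horn, any minimal CNF representation of $R'$ must contain a clause with two or more positive ($<$ or $=$) atoms on the RHS. I would then argue by case analysis, exploiting the rigid coupling of x- and y-axis endpoint constraints within each basic RA relation: such a clause must either mix positive atoms of opposite polarities ($<$ and $>$), or contain positive atoms that cannot share a common ``centre'' variable $z_0$ (for instance because they involve endpoints from distinct axes). Either outcome violates the \ll-Horn template (which demands a common $z_0$ with all disjuncts of the form $\cdot < z_0$, possibly augmented by an equality chain) and, by symmetry, the \dualll-Horn template. Carrying out this case analysis rigorously, so as to cover every non-ORD-Horn binary $R$, is the technical heart of the proof.
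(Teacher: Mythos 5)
Your proposal follows essentially the same architecture as the paper's proof: tractability is cited from \cite{BalbianiCondottaCerro-journal}; maximality is reduced, via the mutual pp-interpretations of Lemma~\ref{lem:RA-in-Q} and Corollary~\ref{cor:transfer-new}, to a question over $(\Q;<)$; the ORD-Horn witness relation is translated to $x=y\Rightarrow u=v$, allowing Lemma~\ref{lem:little} to narrow the possible tractable classes to \ll-Horn and \dualll-Horn; and the punch line is that a non-ORD-Horn $R$ yields an endpoint relation $S$ that cannot lie in either. So the skeleton and the use of Lemma~\ref{lem:little} match exactly, and your explicit construction of $R_0$ (the disjunction over pairs with $r_1\in\{\sa,\si,\equiv\}\Rightarrow r_2\in\{\sa,\si,\equiv\}$) is a correct and in fact slightly more careful version of the paper's witness.

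The gap is in your sketch of the ``key technical claim.'' You propose taking ``any minimal CNF representation of $R'$'' and arguing that a clause with two or more positive atoms ``must mix polarities or lack a common centre $z_0$.'' This is the wrong object and the wrong dichotomy. If $S$ is \ll-Horn, you cannot draw any conclusion from one chosen minimal \emph{general} CNF being non-\ll-Horn; you must instead \emph{assume} $S$ is \ll-Horn, take a minimal \emph{\ll-Horn} representation (which by definition already has a single polarity and a common $z_0$ in every clause, so the dichotomy you suggest cannot arise), and show that minimality collapses every clause to at most one disjunct, i.e.\ to ORD-Horn, contradicting the choice of $R$. The collapse is driven by an observation you do not use: since $R$ is binary, a clause with $l\ge 2$ mentions at least three distinct endpoint variables among the eight, so by pigeonhole some rectangle contributes twice. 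When the two mentions come from the same axis, the built-in constraint $I^-<I^+$ lets you absorb one disjunct into the other, contradicting minimality; when they come from different axes of the same rectangle, one appeals to the axis-independent translation automorphisms of $\mathcal{RA}$ (the argument made explicit in the Block Algebra proof) to show the clause cannot define a correct relation. Without that pigeonhole step and the resulting two-case analysis, your proposed case analysis does not close the argument.
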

\begin{proof}
We know from \cite{BalbianiCondottaCerro-journal} that the ORD-Horn relations among $\mathcal{RA}^{\vee =}$ give a CSP that is tractable. Let $R$ be some binary relation not definable in ORD-Horn and let $J$ be as in Lemma~\ref{lem:RA-in-Q}. Suppose we translate $R$ to an $8$-ary relation $S$ over $(\Q;<)$ via $J$, and let us make the similar translation for all ORD-Horn relations of the Rectangle Algebra, which will become the set of relations $\Gamma$ fo-definable in $(\mathbb{Q};<)$. Owing to Lemma~\ref{lem:RA-in-Q}, we need only argue that CSP$(\Q;\Gamma,S)$ is NP-hard (we abuse notation by presuming $\Gamma$ defines also a set of relations over $\mathbb{I}$).

Since the (ORD-Horn) relation $\{s,\equiv,s^\smile\} \times (\top \setminus \{s,\equiv,s^\smile\}$ translates to $X^-=U^- \Rightarrow Y^-=V^-$ in $\Gamma$, we can deduce from Lemma~\ref{lem:little} that the only maximally tractable classes for relations fo-definable in $(\Q;<)$ that $\{S\} \cup \Gamma$ can sit in are those corresponding with $\ll$ and $\dualll$. In particular, if $S$ is outside of these classes then it follows immediately that CSP$(\Q;\Gamma,S)$ is NP-hard. Let us therefore assume \mbox{w.l.o.g.} that $S$ is within $\ll$, as the other case is dual, and we will seek a contradiction.

Let $\phi$ be some minimal \ll-Horn specification of $S$. 
%Note that $\phi$ arising from can not contain any atoms $x<y$ where $x$ and $y$ are endpoints of intervals associated with distinct dimensions. To see this, one can use the automorphisms of any relations fo-definable in $\mathbb{RA}$ which translate one dimension while leaving the other unchanged.
Consider some clause that is not ORD-Horn. It involves some sequent of the form either $z_1<z_0 \vee \ldots \vee z_l<z_0$ or  $z_1<z_0 \vee \ldots \vee z_l<z_0 \vee z_0=z_1=\ldots=z_l$ where $z_0,z_1,\ldots,z_l$ are distinct variables. Now, since $R$ was binary, we know that some rectangle  is mentioned twice among the $z_0,z_1,\ldots,z_l$. If some comparison involves (w.l.o.g.) $I^-$ and $I^+$, then we can remove this and we contradict minimality (note that $X^-<X^+$ because we do not allow point-like intervals). Thus, if we are not already ORD-Horn we must have something of the form, again w.l.o.g., $X^-<Y^- \vee X^-<Y^+\vee \ldots$ in the sequent, but this can be simplified to $X^-<Y^+ \vee \ldots$ contradicting minimality. Thus the assumption that we be minimal actually makes us ORD-Horn.
\end{proof}

\subsection{The $r$-dimensional Block Algebra}

Firstly, we will profit from studying certain automorphisms of the Block Algebra. An \emph{automorphism} of a structure $\bB$ is a permutation $f$ on its domain so that, for all relations $R$ of $\bB$, of arity $k$, and all $k$-tuples of domain elements $x_1,\ldots,x_k \in B$, $R(x_1,\ldots,x_k)$ iff $R(f(x_1),\ldots,f(x_k))$. The Interval Algebra $\mathcal{IA}$ enjoys all \emph{translation} automorphisms, of the form $[X^-,X^+] \mapsto [q+X^-,q+X^+]$, for any $q \in \Q$. The Block Algebra $\mathcal{BA}_r$ enjoys these similarly, independently for each of its axes. That is, $\mathcal{BA}_r$ has all automorphisms of the form $([X_1^-,X_1^+],\ldots,[X_r^-,X_r^+]) \mapsto ([q_1+X_1^-,q_1+X_1^+],\ldots,[q_r+X_r^-,q_r+X_r^+])$. In particular, there is an automorphism of $\mathcal{BA}_r$ that translates one axis any amount, while leaving the other axes where they are. An important property of automorphisms that we will use is that the truth of a first-order formula is invariant under an automorphism. That is, if $\phi(x_1,\ldots,x_k)$ is a first-order formula over  $\mathcal{BA}_r$, and $h$ is an automorphism of $\mathcal{BA}_r$, then $\mathcal{BA}_r \models \phi(x_1,\ldots,x_k)$ iff $\mathcal{BA}_r \models \phi(h(x_1),\ldots,h(x_k))$.

A \emph{hypercuboid} is a polytope specified in $k$-dimensional space by the intersection of intervals $c_i\leq x_i \leq d_i$, for $c_i,d_i \in \Q$ and $i \in \{1,\ldots,k\}$. We are now in a position to extend our classification transfers to the $r$-dimensional Block Algebra.
\begin{theorem}
The class ORD-Horn on the $r$-dimensional Block Algebra is maximally tractable with respect to the binary relations fo-definable in $\mathcal{BA}_r$.
\label{prop:BA}
\end{theorem}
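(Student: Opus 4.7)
The plan is to follow the proof of Theorem~\ref{prop:RA} with the natural generalization to $r$ axes. The two main ingredients are an $r$-dimensional analog of the classification transfer Lemma~\ref{lem:RA-in-Q} and a suitable extension of the minimality argument on \ll-Horn specifications.

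First I would establish the following $r$-dimensional version of Lemma~\ref{lem:RA-in-Q}: there are $2r$ unary pp-interpretations $I_1,\ldots,I_{2r}$ of $(\Q;<)$ in $\mathcal{BA}_r$, each projecting a block to one of its $2r$ endpoint coordinates, together with a $2r$-ary pp-interpretation $J$ of $\mathcal{BA}_r$ in $(\Q;<)$ that maps a tuple $(X_1^-,X_1^+,\ldots,X_r^-,X_r^+)$ with $X_j^- < X_j^+$ for every $j$ to the hypercuboid it specifies, and such that $J\circ(I_1,\ldots,I_{2r})$ is pp-homotopic to the identity. The axis-$j$ versions of $(\sa,\top)$ and $(\fa,\top)$, namely the basic relations with $\sa$ or $\fa$ in coordinate $j$ and $\top$ elsewhere, play the role of $(\sa,\top)$ and $(\fa,\top)$ in the RA proof and are pp-definable in $\mathcal{BA}_r$ by a direct adaptation of the witness given before Lemma~\ref{lem:RA-in-Q}. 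Applying Corollary~\ref{cor:transfer-new} then furnishes a polynomial-time equivalence between $\Csp(\mathcal{BA}_r;\text{ORD-Horn},R)$ and $\Csp(\Q;\Gamma,S)$, where $S$ is the $4r$-ary $J$-translate of the binary relation $R$ and $\Gamma$ consists of the $J$-translates of the ORD-Horn relations of $\mathcal{BA}_r$.

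Next I would mirror the high-level argument of Theorem~\ref{prop:RA}. The set $\Gamma$ still contains a translate of an ORD-Horn implication of the shape $X_1^- = U_1^- \Rightarrow Y_1^- = V_1^-$ (for instance, the analog of $\{\sa,\equiv,\si\}\times(\top\setminus\{\sa,\equiv,\si\})$ on the first axis and $\top$ on the remaining axes), so Lemma~\ref{lem:little} restricts any maximally tractable class over $(\Q;<)$ that could contain $\{S\}\cup\Gamma$ to be either \ll-Horn or \dualll-Horn. Assuming w.l.o.g.\ that $S$ is in \ll-Horn, fix a minimal \ll-Horn specification $\phi$ of $S$; the goal becomes to show that every clause of $\phi$ is already ORD-Horn, contradicting the choice of $R$ as not ORD-Horn.

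The main obstacle is generalizing the minimality argument. Consider a non-ORD-Horn clause of $\phi$, whose sequent contains $l\geq 2$ atoms and therefore $l+1\geq 3$ distinct variables among the $4r$ endpoint coordinates of the two blocks mentioned by $R$. Pigeonhole forces at least two of these variables to belong to the same block. The same-axis cases proceed exactly as in the RA proof: one either collapses a disjunct using $X_j^- < X_j^+$, or subsumes $z_0 < Y_j^-$ under $z_0 < Y_j^+$ via $Y_j^- < Y_j^+$, each contradicting minimality. The genuinely new situation is that the two shared variables lie on different axes of the same block. Here I would exploit the fact that $\mathcal{BA}_r$ admits per-axis translation automorphisms, so that $S$ is invariant under independently translating, for each axis $j$, all axis-$j$ coordinates (of both blocks) by a common rational. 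This invariance allows one to show that $\phi$ can be chosen to be axis-uniform in the sense that every clause of a minimal $\phi$ compares coordinates along a single axis only, thereby reducing the argument to $r$ independent copies of the one-axis simplification. Carrying out this decomposition rigorously is the only step that goes beyond the RA proof, and once completed it forces every clause of $\phi$ to be ORD-Horn, yielding the desired contradiction.
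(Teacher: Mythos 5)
Your plan is in essentially the same spirit as the paper's proof: you correctly identify the $r$-dimensional analogue of Lemma~\ref{lem:RA-in-Q}, the appeal to Lemma~\ref{lem:little} via the translate of a one-axis ORD-Horn implication, and, crucially, the per-axis translation automorphisms of $\mathcal{BA}_r$ as the new ingredient needed to handle cross-axis atoms. However, you explicitly leave the one step that actually goes beyond Theorem~\ref{prop:RA} unexecuted (``Carrying out this decomposition rigorously is the only step that goes beyond the RA proof''), and the shape you sketch for that step differs from the paper's and is not obviously achievable. You propose to prove a normal form (``every clause of a minimal $\phi$ compares coordinates along a single axis only'') and then reduce to $r$ independent one-axis simplifications; but clauses of an \ll-Horn formula also carry a conjunction of equality premises, and it is not immediate that premises can be assumed axis-uniform, nor is the ``$r$ independent copies'' reduction cleanly stated once a single clause mixes axes across its premise and sequent.

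The paper's Case B is more local and avoids the normal-form detour. After Case A (two sequent variables on the same axis of the same block, handled exactly as for $\mathcal{RA}$), one may assume no axis of any block occurs twice among the distinct sequent variables $z_0,\dots,z_l$ ($l\ge 2$). If two of the $z_i$ still belong to the same block, they sit on different axes, so the sequent contains an atom $X^p<Y^q$ with $X^p,Y^q$ coordinates of one block on distinct axes. Minimality of the sequent supplies a tuple of $S$ on which the premise holds, $X^p<Y^q$ is the only true disjunct, and all other disjuncts are false. Now translate the axis of $X^p$ (all four coordinates of both blocks on that axis) by a large enough rational: this falsifies $X^p<Y^q$, leaves every other disjunct untouched (no other $z_i$ lies on that axis, and $X^{\overline{p}}$ does not occur by Case A), yet keeps the tuple in $S$ by automorphism-invariance. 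This contradicts $\phi$ defining $S$. Hence no block is mentioned twice, so $l\le 1$ and the clause is already ORD-Horn. You should replace your normal-form sketch with this direct contradiction, which is both shorter and avoids the uncertain ``axis-uniform $\Rightarrow$ $r$ copies'' reduction.
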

\begin{proof}
We follow the proof of the Theorem~\ref{prop:RA} up to the point where we consider some clause that is not ORD-Horn. It involves some sequent of the form either $z_1<z_0 \vee \ldots \vee z_l<z_0$ or  $z_1<z_0 \vee \ldots \vee z_l<z_0 \vee z_0=z_1=\ldots=z_l$ where $z_0,z_1,\ldots,z_l$ are distinct variables. Now, since $R$ was binary, we know that some hypercuboid is mentioned twice among the $z_0,z_1,\ldots,z_l$.

Case A. The same dimension is mentioned twice.  This has been dealt with in the proof of Theorem~\ref{prop:RA}.

Case B. No dimension is mentioned twice, but we have an atom of the form $X^p<Y^q$, for $p,q \in \{+,-\}$, where $I$ and $J$ are different dimensions of the same hypercuboid. The $r$-dimensional Block Algebra enjoys an automorphism that translates the dimension associated with $X$ while leaving unchanged all the other dimensions. Consider the evaluation that witnesses that the atom $X^p<Y^q$ is true whilst everything else in that clause is false. Now applying an automorphism we can falsify this atom while leaving all the others of that clause false (remember $X^{p}$ is not repeated and $X^{\overline{p}}$, where $\overline{p} \in \{-,+\}\setminus\{p\}$, does not appear anywhere since then we would be in Case A). This demonstrates that $\phi$ does not specify a correct translation from the $r$-dimensional Block Algebra (the truth of $\phi$ should be invariant under these automorphisms).

Thus the assumption that we be minimal actually makes us ORD-Horn.
\end{proof}
Note that our proof that ORD-Horn is maximally tractable works also for the Interval Algebra, where that result famously originated in \cite{Nebel}. The only change we need to make is with Lemma~\ref{lem:little}, because $x=y \Rightarrow u=v$ can not originate from the Interval Algebra, but $x=y \Rightarrow u=v \wedge y>x \wedge v>u$ can, and suffices for our argument. Note that the analog of Corollary~\ref{cor:RA} for the Block Algebra $\mathcal{BA}_{r}$ proceeds with almost identical proof.

\section{Final Remarks}
\label{sec:final}

Our approach is not tailored to the Rectangle or Block Algebras.
Indeed, its motivation is inextricably linked to its
wide scope. We are currently working on deriving full complexity
classifications, above the basic relations, in other formalisms,
for example the Cardinal Direction Calculus\footnote{Not to be confused with the Cardinal Direction Relations, see \cite{TemporalSpatialSurvey} for disambiguation.} of \cite{Frank91} and the Directed Interval Algebra of \cite{Renz}. Applications to these settings will appear in the long version of this article.

\bibliographystyle{named}

%\bibliography{local}

\section*{Appendix}

Let us additionally note that using the famous result of \cite{RheingoldJACM}, one may improve Lemma~\ref{lem:folklore} and Corollary~\ref{cor:transfer-new} to logspace reductions.
 
\subsection*{The Cardinal Direction Calculus}

Recall the Cardinal Direction Calculus of \cite{Frank91}. Let $\mathcal{CDC}$ be the structure whose domain elements are rational points in the plane with basic relations $\mathtt{N},\mathtt{S},\mathtt{E},\mathtt{W},\mathtt{NE},\mathtt{SE},\mathtt{SW},\mathtt{NW}$. These are interpreted projectively, in the following sense. $x (\mathtt{N}) y$ (read as ``$x$ north of $y$'') holds iff $x$ and $y$ have the same position on the horizontal axis and $x$ is above $y$ on the vertical axis. Then, $x (\mathtt{NE}) y$ iff $\exists y' (y' (\mathtt{E}) y \wedge x (\mathtt{N}) y')$ (read $x (\mathtt{NE}) y$ as ``$x$ is northeast of $y$''). The other relations can be read from compass points in the obvious fashion.
\begin{lemma}
There are two unary pp-interpretations $I_1$ and $I_2$ of $(\mathbb{Q};<)$ in $\mathcal{CDC}$, and a binary pp-interpretation $J$ of $\mathcal{CDC}$ in $(\mathbb{Q};<)$, so that $J \circ (I_1,I_2)$ is pp-homotopic to the identity interpretation.
\end{lemma}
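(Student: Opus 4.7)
The plan is to mirror Lemma~\ref{lem:IA1}, using the two coordinate projections of the plane in place of the two endpoints of an interval. I would define $I_1 := (1, \top, p_1)$ and $I_2 := (1, \top, p_2)$, where $p_1, p_2 : \mathbb{Q}^2 \to \mathbb{Q}$ are the horizontal and vertical projections, and set $J := (2, \top, h)$ with $h(u,v) := (u,v)$ sending a pair of rationals to the point they specify in the plane.

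For the direction $J$, each basic relation of $\mathcal{CDC}$ admits a direct pp-translation into $(\mathbb{Q};<)$. For instance $((u_1,u_2)(\mathtt{N})(v_1,v_2))_J$ is $u_1 = v_1 \wedge u_2 > v_2$ and $((u_1,u_2)(\mathtt{NE})(v_1,v_2))_J$ is $u_1 > v_1 \wedge u_2 > v_2$, and similarly for the remaining six compass relations and equality. This step is essentially mechanical.

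The direction of $I_1$ and $I_2$ is where the real work lies, because each basic $\mathcal{CDC}$-relation constrains both coordinates simultaneously, whereas $x < y$ under $I_1$ ought to constrain only the first coordinate. The key observation is that \emph{``$X$ and $Y$ share a vertical line''} is pp-definable as $\exists W (X (\mathtt{N}) W \wedge Y (\mathtt{N}) W)$ (take any $W$ strictly south of both on the common vertical), and dually \emph{``$X$ and $Y$ share a horizontal line''} is pp-definable using $\mathtt{E}$. Combining these with the cardinal relations, I would pp-define
\[
X <_{I_1} Y \;\equiv\; \exists Z, W \bigl(Z (\mathtt{E}) X \wedge Y (\mathtt{N}) W \wedge Z (\mathtt{N}) W\bigr),
\]
which indeed holds iff $p_1(X) < p_1(Y)$: the first atom produces $Z$ east of $X$ on the same horizontal, while the remaining two force $Z$ onto the vertical through $Y$, giving $p_1(Y) = p_1(Z) > p_1(X)$; for the converse one takes $Z := (p_1(Y), p_2(X))$ and any $W$ on the vertical through $Y$ strictly south of both $X$ and $Y$. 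The formula for $<_{I_2}$ is obtained by interchanging $\mathtt{N}$ and $\mathtt{E}$ throughout; equality under $I_1$ (resp.\ $I_2$) is just the common-vertical (resp.\ common-horizontal) formula above.

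Finally, $J \circ (I_1, I_2)$ sends a pair of points $(X,Y)$ to the ``corner'' $(p_1(X), p_2(Y))$, so pp-homotopy with the identity interpretation reduces to pp-defining the relation $\{((X,Y),Z) : p_1(Z) = p_1(X) \wedge p_2(Z) = p_2(Y)\}$ in $\mathcal{CDC}$. This is obtained by conjoining the two ``same-line'' formulas:
\[
\exists W_1, W_2 \bigl(X (\mathtt{N}) W_1 \wedge Z (\mathtt{N}) W_1 \wedge Y (\mathtt{E}) W_2 \wedge Z (\mathtt{E}) W_2\bigr).
\]
The main obstacle throughout is the absence of disjunction in pp-logic: a condition like ``$X$ and $Y$ lie on the same vertical line'' is naturally a disjunction of $\mathtt{N}$, $\mathtt{S}$ and equality, and must instead be encoded via an auxiliary existentially quantified witness point. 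Once that device is accepted, all the remaining verifications are direct.
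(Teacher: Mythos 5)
Your proposal is correct and follows essentially the same route as the paper: the same $I_1$, $I_2$, $J$ (projections and reassembly), the same observation that ``same vertical/horizontal line'' is pp-encodable via an auxiliary witness point, and the same style of pp-homotopy formula; your witness formulas for $(X<Y)_{I_i}$ differ in minor details (e.g.\ you route through $\mathtt{E}$ and a common southern witness where the paper uses two northern witnesses and $\mathtt{W}$) but both are valid.
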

\begin{proof}
Let $I_1$ and $I_2$ be the interpretations $(1,\top,p_1)$ and $(1,\top,p_2$) where $p_1$ and $p_2$ are the binary projections to first and second coordinate, respectively. Let $J$ be $(2,\top,h)$ where $h(x,y)=(x,y)$ maps pairs of rationals to the point they specify in the plane. $(X<Y)_{I_1}$ is $\exists X',Y' \, (X' (\mathtt{N}) X \wedge Y' (\mathtt{N}) Y \wedge X' (\mathtt{W}) Y')$ and $(X<Y)_{I_2}$ is $\exists X',Y' \, (X' (\mathtt{E}) X \wedge Y' (\mathtt{E}) Y \wedge X' (\mathtt{S}) Y')$. 

Each basic relation of $\mathcal{CDC}$ is pp-definable from the basic relation $<$ on $(x,y)$ and $(x',y')$ as follows.
\[
\begin{array}{ll}
\mathtt{N} \ \ \ \ \ \ \ \ & y>y' \wedge x=x' \\
\mathtt{E} & x>x' \wedge y=y' \\
\mathtt{S} & y'>y \wedge x=x' \\
\mathtt{W} & x'>x \wedge y=y' \\
\mathtt{NE} & y>y' \wedge x>x' \\
\mathtt{SE} & y'>y \wedge x>x' \\
\mathtt{SW} & y'>y \wedge x'>x \\
\mathtt{NW} & y>y' \wedge x'>x \\
\end{array}
\]
$(X=Y)_{I_1}$ is $\exists Z \, (Z (\mathtt{N})X \wedge Z (\mathtt{N})Y)$ and $(X=Y)_{I_2}$ is $\exists Z \, (Z (\mathtt{E})X \wedge Z (\mathtt{E})Y)$. $((u_1,u_2)(\equiv)(v_1,v_2))_J$ is $u_1=v_1 \wedge u_2=v_2$. Now $J \circ (I_1,I_2)$ maps $(U,V)$ = $((u_1,u_2),(v_1,v_2)))$ to $Z:=(u_1,v_2)$ and pp-homotopy is given by $\exists U'V' \, (U'(\mathtt{N})U \wedge U'(\mathtt{N})Z \wedge V'(\mathtt{E})V \wedge V'(\mathtt{E})Z)$.
\end{proof}
%Note that this argument works for both the projective and cone interpretations of $\mathtt{N},\mathtt{S},\mathtt{E},\mathtt{W},\mathtt{NE},\mathtt{SE},\mathtt{SW},\mathtt{NW}$.

\begin{corollary}
Let $\Gamma$ be a first-order expansion of $\mathcal{CDC}$. Either CSP$(\Gamma)$ is in P or it is NP-complete.
\end{corollary}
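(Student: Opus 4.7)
The plan is to transfer the classification project from $\mathcal{CDC}$ to $(\mathbb{Q};<)$ via the previous lemma, and then invoke the dichotomy for temporal CSPs established in \cite{tcsps-journal}. Concretely, the previous lemma verifies exactly the hypotheses of Corollary~\ref{cor:transfer-new} with $\bC := \mathcal{CDC}$ and $\bD := (\mathbb{Q};<)$: we have two unary pp-interpretations $I_1,I_2$ of $(\mathbb{Q};<)$ in $\mathcal{CDC}$ and a binary pp-interpretation $J$ of $\mathcal{CDC}$ in $(\mathbb{Q};<)$ such that $J\circ(I_1,I_2)$ is pp-homotopic to the identity interpretation of $\mathcal{CDC}$.

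First I would apply Corollary~\ref{cor:transfer-new} directly: given any first-order expansion $\Gamma$ of $\mathcal{CDC}$ with finite relational signature, it produces a first-order expansion $\Delta$ of $(\mathbb{Q};<)$ with finite relational signature such that $\mathrm{CSP}(\Gamma)$ and $\mathrm{CSP}(\Delta)$ are polynomial-time interreducible. Since $\Delta$ is a temporal constraint language, the main theorem of \cite{tcsps-journal} supplies the desired dichotomy for $\mathrm{CSP}(\Delta)$: it is either in P or NP-complete.

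Next I would transport this dichotomy back to $\Gamma$. If $\mathrm{CSP}(\Delta) \in \mathrm{P}$, then by the reduction $\mathrm{CSP}(\Gamma) \le_P \mathrm{CSP}(\Delta)$ provided by the corollary, $\mathrm{CSP}(\Gamma) \in \mathrm{P}$. If $\mathrm{CSP}(\Delta)$ is NP-complete, then by the reverse reduction $\mathrm{CSP}(\Delta) \le_P \mathrm{CSP}(\Gamma)$ it follows that $\mathrm{CSP}(\Gamma)$ is NP-hard; membership in NP is immediate because the same reverse reduction together with NP-membership of $\mathrm{CSP}(\Delta)$ (temporal CSPs are in NP) shows that $\mathrm{CSP}(\Gamma)$ is also in NP.

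There is no serious obstacle: the conceptual and technical work was absorbed into the pp-interpretations and pp-homotopy witness constructed in the preceding lemma, and into Corollary~\ref{cor:transfer-new}. The only mild subtlety is the NP-membership clause above, which must be handled explicitly since Corollary~\ref{cor:transfer-new} transfers polynomial-time reductions but is stated only in terms of mutual pp-interpretability; it is, however, dispatched by the two-way reduction as indicated.
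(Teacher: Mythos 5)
Your proof is correct and matches the paper's intended derivation: the paper leaves this corollary unproved precisely because it follows immediately by plugging the preceding lemma into Corollary~\ref{cor:transfer-new} (with $\bC=\mathcal{CDC}$, $\bD=(\mathbb{Q};<)$) and then invoking the P/NP-complete dichotomy for temporal CSPs from \cite{tcsps-journal}, just as you describe. Your explicit remark about NP-membership via the reverse reduction is a small but welcome extra level of care that the paper glosses over.
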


\subsection{Directed Interval Algebra}

Recall the Directed Interval Algebra of \cite{Renz}, and in particular the two binary relations $\mathtt{cb}$ and $\mathtt{cf}$, which correspond to $\sa$ and $\fa$ in the Interval Algebra. They are annotated with a subscript $=$ to indicate the directions of the two intervals are concomitant. Let $\mathcal{DIA}$ be the directed interval algebra, with its basic relations only, on the line of rationals and let $\mathcal{DIA}^{\rightarrow}$ be this structure augmented with a unary relation indicating the direction is coincident with that of the axis (that is, forwards).
\begin{lemma}
There are two unary pp-interpretations, $I_1$ and $I_2$, of $(\mathbb{Q};<)$ in $\mathcal{DIA}^{\rightarrow}$, and a binary pp-interpretation $J$ of $\mathcal{DIA}^{\rightarrow}$ in $(\mathbb{Q};<)$, so that $J \circ (I_1,I_2)$ is pp-homotopic with the identity interpetation.
\end{lemma}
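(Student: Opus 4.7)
The plan is to mimic Lemma~\ref{lem:IA1} almost verbatim, using the fact that the DIA relations $\mathtt{cb}_=$ and $\mathtt{cf}_=$, restricted to pairs of forward-directed intervals, behave exactly like $\sa$ and $\fa$ in the Interval Algebra. Let $F$ denote the unary ``forward'' predicate of $\mathcal{DIA}^{\rightarrow}$. I would set $I_1 := (1,F,p_1)$ and $I_2 := (1,F,p_2)$, where $p_1$ and $p_2$ project a forward directed interval to its lower and upper endpoint respectively, and $J := (2,\, X^-<X^+,\, h)$ with $h(X^-,X^+) = [X^-,X^+]^{\rightarrow}$ the forward directed interval with those endpoints.

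The pp-formulas defining $(X<Y)_{I_1}$ and $(X<Y)_{I_2}$ can be copied from Lemma~\ref{lem:IA1}, with $\sa$ replaced by $\mathtt{cb}_=$, $\fa$ replaced by $\mathtt{cf}_=$, and $F(\cdot)$ asserted of every quantified interval variable so that the Allen-style reading of these relations is unambiguous. Under $J$, the basic relations of $\mathcal{DIA}^{\rightarrow}$ translate via the endpoint semantics of Table~\ref{tb:allen-basic-defs}, restricted to forward intervals; in particular $((u_1,u_2)(\mathtt{cb}_=)(v_1,v_2))_J$ is $u_1=v_1 \wedge u_2<v_2$, $((u_1,u_2)(\mathtt{cf}_=)(v_1,v_2))_J$ is $u_1<v_1 \wedge u_2=v_2$, and $F(X)$ translates to $X^-<X^+$ (already guaranteed by the domain of $J$), while basic DIA relations that insist on a mix of directions translate to unsatisfiable pp-formulas such as $X^-<X^+ \wedge X^+<X^-$, which is acceptable within pp-interpretations. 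Pp-homotopy of $J \circ (I_1,I_2)$ with the identity is then witnessed by $F(X) \wedge F(Y) \wedge F(Z) \wedge X(\mathtt{cb}_=)Z \wedge Y(\mathtt{cf}_=)Z$, pinning down $Z = [X^-,Y^+]^{\rightarrow}$ exactly as $X(\sa)Z \wedge Y(\fa)Z$ did in the Interval Algebra proof.

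The main obstacle is that $h$ surjects only onto the forward directed intervals, while the domain of $\mathcal{DIA}^{\rightarrow}$ also contains backward ones; the obvious two-variable encoding using $\{(x,y) : x \neq y\}$ is not pp-definable in $(\Q;<)$. I would address this either by raising the arity of $J$ and using an extra coordinate as a direction tag ranging over a pp-definable subset, or by first reducing the classification problem for $\mathcal{DIA}^{\rightarrow}$ to its forward fragment via the direction-reversing involution of $\mathcal{DIA}$ (an automorphism, hence pp-preserving). Either adaptation leaves the endpoint calculations sketched above unchanged, and the outcome is the pp-homotopy required by the statement.
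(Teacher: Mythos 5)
Your $I_1$, $I_2$, $J$, and the $\mathtt{cb}_=$/$\mathtt{cf}_=$ defining formulas are exactly the ones in the paper's proof, which transports Lemma~\ref{lem:IA1} to $\mathcal{DIA}^{\rightarrow}$ verbatim just as you anticipated. You are if anything more careful than the paper: the witness $X(\mathtt{cb}_=)Z \wedge Y(\mathtt{cf}_=)Z$ given there is also satisfied by an all-backward triple, which lies outside the domain of $h(g_1,g_2)$, so your additional conjuncts asserting that $X$, $Y$, $Z$ are forward are what is actually needed for the formula to define $h(g_1(X),g_2(Y))=Z$ exactly; likewise your domain $X^-<X^+$ for $J$ is the correct one where the paper writes $X^- \leq X^+$. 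The surjectivity concern you flag is genuine and the paper passes over it silently. Of your two suggested repairs, the direction-tag idea runs into exactly the obstacle it is meant to avoid: the forward/backward distinction is a disjunction ($x<y$ or $y<x$), and over $(\Q;<)$ no pp-definable subset of $\Q^k$ can be partitioned into two disjoint non-empty pp-definable pieces that could serve as such a tag. Your second repair is the one the paper's development implicitly takes: the lemma immediately following this one reduces every first-order expansion of $\mathcal{DIA}$ to a corresponding expansion of $\mathcal{DIA}^{\rightarrow}$ using the direction-reversing automorphism, so the classification transfer effectively happens over the forward fragment, where the coordinate map of your $J$ is a genuine surjection and the argument goes through.
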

\begin{proof}
Let $\mathit{forw}$ be the unary relation of $\mathcal{DIA}^{\rightarrow}$ denoting ``forwards''.
Let $I_1$ and $I_2$ be $(1,\mathit{forw}(X),p_1)$ and $(1,\mathit{forw}(X),p_2)$, where $p_1$ and $p_2$ are the binary projections to start and finish point of the interval, respectively (following the direction of the arrow). Let $J$ be $(2,X^-\leq X^+,h)$ where $h(X^-,X^+)=X$ maps pairs of points $X^- \leq X^+$ to the interval they specify in the line.

Define $(X<Y)_{I_1}$ and $(X<Y)_{I_2}$ by $\exists Y',W \; \big(Y (\mathtt{cb}_=) Y' \wedge X (\mathtt{cb}_=) W \wedge Y' (\mathtt{cf}_=) W \big)$ and $\exists X',W \; \big(X (\mathtt{cf}_=) X' \wedge Y (\mathtt{cf}_=) W \wedge X' (\mathtt{cb}_=) W \big)$, respectively.
Define $((u_1,u_2)(\mathtt{cb}_=)(v_1,v_2))_J$ by $u_1=v_1 \wedge u_2<v_2$ and $((u_1,u_2)(\mathtt{cf}_=)(v_1,v_2))_J$ by $u_1<v_1 \wedge u_2=v_2$. 

Define $(X=Y)_{I_1}$ as $X(\mathtt{cb}_=)Y$ and  $(X=Y)_{I_2}$ as $X(\mathtt{cf}_=)Y$. Define $((u_1,u_2)(\equiv)(v_1,v_2))_J$ as $u_1=v_1 \wedge u_2=v_2$. Now $J \circ (I_1,I_2)$ maps $([X^-,X^+],[Y^-,Y^+])$ to $[Z^-,Z^+]:=[X^-,Y^+]$ and pp-homotopy is given by $X(\mathtt{cb}_=)Z \wedge Y(\mathtt{cf}_=)Z$.
\end{proof}
Note that we can not pp-define $<$ in the basic relations of $\mathcal{DIA}$ without having restricted ourselves to just directed intervals in $\mathit{forw}$.
\begin{lemma}
For every first-order expansion $\Gamma$ of $\mathcal{DIA}$, there exists a first-order expansion $\Gamma'$ of $\mathcal{DIA}^{\rightarrow}$ so that CSP$(\Gamma)$ and CSP$(\Gamma')$ have polynomially equivalent complexities.
\end{lemma}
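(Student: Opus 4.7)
The plan is to set $\Gamma' := \Gamma \cup \{\mathit{forw}\}$, which is manifestly a first-order expansion of $\mathcal{DIA}^{\rightarrow}$. The reduction $\Csp(\Gamma) \leq_{\mathrm{P}} \Csp(\Gamma')$ is immediate because $\Gamma \subseteq \Gamma'$, so every instance of the former is already an instance of the latter. The substantive direction $\Csp(\Gamma') \leq_{\mathrm{P}} \Csp(\Gamma)$ requires simulating the unary predicate $\mathit{forw}$ in polynomial time using only relations of $\Gamma$.

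Two ingredients drive the simulation. First, the ``flip everything'' map $\sigma \colon [a,b]^d \mapsto [a,b]^{-d}$ is an automorphism of $\mathcal{DIA}$: underlying positions are unchanged, and in every binary basic relation the ``same vs.\ opposite direction'' parity is preserved when both intervals are flipped simultaneously. Since $\Gamma$ is first-order definable over $\mathcal{DIA}$, $\sigma$ is an automorphism of $\Gamma$ as well. Second, the binary ``same-direction'' relation $\mathit{SD}(X,Y)$, holding precisely when $X$ and $Y$ point in the same direction, is primitive-positive definable in $\mathcal{DIA}$ via
\[ \mathit{SD}(X,Y) \;\equiv\; \exists Z \; \bigl( \mathtt{p}_{=}(X,Z) \wedge \mathtt{p}_{=}(Y,Z) \bigr), \]
where $\mathtt{p}_{=}$ denotes the same-direction variant of Allen's ``precedes'' relation in $\mathcal{DIA}$. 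If such a $Z$ exists then $X$ and $Y$ both share $Z$'s direction; conversely, for any pair $(X,Y)$ of common direction, a suitable $Z$ of that same direction with $Z^{-} > \max(X^{+}, Y^{+})$ always works. Crucially, this gadget imposes \emph{no} positional constraint on $X$ and $Y$.

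Given an instance $\Phi$ of $\Csp(\Gamma')$ with variable set $V$, let $W$ be the set of variables appearing in $\mathit{forw}$-constraints. Delete those constraints, introduce a single fresh variable $z$, and add $\mathtt{p}_{=}(v,z)$ for every $v \in W$. The resulting instance $\Phi'$ is over $\Gamma$ and has polynomial size. Any solution of $\Phi'$ assigns each member of $W$ the common direction of $z$: if this direction is forward, restricting the solution to $V$ solves $\Phi$; if it is backward, applying $\sigma$ to the whole solution (which remains a solution of $\Phi'$ by the invariance from the first ingredient) yields one in which $W$ points forward, and restricting this to $V$ solves $\Phi$. Conversely, any solution of $\Phi$ extends to a solution of $\Phi'$ by placing $z$ forward and to the right of every variable in $W$. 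The principal obstacle in this plan is the pp-definability of $\mathit{SD}$ without spurious positional constraints; gadgets built from ``starts'' or ``finishes'' alone force shared endpoints, so one has to identify a basic relation (``precedes'', or equivalently ``during'') that is geometrically loose enough for the witness $Z$ to be placed freely, thereby decoupling direction from position.
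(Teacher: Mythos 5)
Your proposal is correct and follows essentially the same approach as the paper: both reduce $\Csp(\Gamma')$ to $\Csp(\Gamma)$ by replacing the $\mathit{forw}$ constraints with primitive-positive ``same-direction'' constraints tying together all the affected variables, and both then appeal to the direction-flipping automorphism of $\mathcal{DIA}$ to show that a backward solution can be converted to a forward one. The only (cosmetic) differences are the choice of pp-gadget for same-direction and your use of a single fresh reference variable $z$ rather than chaining pairwise same-direction constraints $X_1(\mathit{same})X_2, \ldots, X_{m-1}(\mathit{same})X_m$ as the paper does.
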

\begin{proof}
Let $\Gamma$ be a first-order expansion of $\mathcal{DIA}$ and let $(\Gamma;\mathit{forw})$ be the corresponding first-order expansion of $\mathcal{DIA}^{\rightarrow}$. The reduction from CSP$(\Gamma;\mathit{forw})$ to CSP$(\Gamma)$ is the identity. We consider the reduction from CSP$(\Gamma;\mathit{forw})$ to CSP$(\Gamma)$. Let $\phi$ be an instance of the former. We build $\phi'$ from $\phi$ as follows. Enumerate the constraints in $\phi$ of the form $\mathit{forw}(X)$ as $\mathit{forw}(X_1),\ldots,\mathit{forw}(X_m)$. In $\phi'$ these are replaced by binary relations $X_1(\mathit{same})X_2 \wedge \ldots \wedge X_{m-1}(\mathit{same})X_m)$ where $U(\mathit{same})V$ is defined as $\exists U',V' \, (U(\mathtt{e}_=)U' \wedge V(\mathtt{e}_=)V' \wedge U'(\mathtt{eq}_=)V')$, and forces both directed intervals to be oriented the same way. Clearly this reduction is polynomial, we will now prove it is correct.

Suppose $(\Gamma;\mathit{forw}) \models \phi$. Then $\Gamma \models \phi'$ by choosing the forward direction for everything appearing in the new $\mathit{same}$ constraints.

Suppose $\Gamma \models \phi'$. Then we can say $(\Gamma;\mathit{forw}) \models \phi$ since all of the constraints of $\phi$ are invariant under the automorphism of $\Gamma$ that maps each directed interval to its inverse under $\mathtt{Eq}_{\neq}$. 
\end{proof}
\begin{corollary}
Let $\Gamma$ be a first-order expansion of $\mathcal{DIA}$. Either CSP$(\Gamma)$ is in P or it is NP-complete.
\end{corollary}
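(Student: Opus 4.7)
The plan is to chain together the three ingredients already in place: the mutual pp-interpretability of $\mathcal{DIA}^{\rightarrow}$ and $(\mathbb{Q};<)$ with the required pp-homotopy (the first lemma of this subsection, fed into Corollary~\ref{cor:transfer-new}); the polynomial-time equivalence between CSPs over first-order expansions of $\mathcal{DIA}$ and CSPs over first-order expansions of $\mathcal{DIA}^{\rightarrow}$ (the second lemma of this subsection); and the known P vs.\ NP-complete dichotomy for temporal CSPs, i.e., for first-order expansions of $(\mathbb{Q};<)$, established in \cite{tcsps-journal}. The corollary then falls out as a purely formal corollary of these three facts.

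Concretely, given a first-order expansion $\Gamma$ of $\mathcal{DIA}$, I would first apply the reduction lemma to produce a first-order expansion $\Gamma'$ of $\mathcal{DIA}^{\rightarrow}$ such that $\mathrm{CSP}(\Gamma)$ and $\mathrm{CSP}(\Gamma')$ are polynomially equivalent. Next, setting $\bC := \mathcal{DIA}^{\rightarrow}$ and $\bD := (\mathbb{Q};<)$, the pp-interpretations $I_1, I_2, J$ exhibited earlier satisfy the hypotheses of Lemma~\ref{lem:transfer-new}, so Corollary~\ref{cor:transfer-new} applied to $\Gamma'$ yields a first-order expansion $\Delta$ of $(\mathbb{Q};<)$ with $\mathrm{CSP}(\Gamma')$ and $\mathrm{CSP}(\Delta)$ polynomially equivalent. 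Composing the two reductions, $\mathrm{CSP}(\Gamma)$ is polynomially equivalent to $\mathrm{CSP}(\Delta)$.

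Finally, the temporal CSP dichotomy of \cite{tcsps-journal} asserts that $\mathrm{CSP}(\Delta)$ is either in P or NP-complete, and membership in each of these classes is preserved by polynomial-time many-one reductions in both directions. Transporting this dichotomy back along the chain, $\mathrm{CSP}(\Gamma)$ is either in P or NP-complete, which is exactly the claim. There is no real obstacle here: the mathematical content has been discharged by the two lemmas above, and the corollary is the straightforward conclusion of concatenating the reductions with the existing classification over $(\mathbb{Q};<)$. The only point that warrants a sentence of care is that the reduction provided by the second lemma goes through the auxiliary unary symbol $\mathit{forw}$, which lives in $\mathcal{DIA}^{\rightarrow}$ but not in $\mathcal{DIA}$; this is precisely why passing through $\mathcal{DIA}^{\rightarrow}$ rather than directly applying classification transfer to $\mathcal{DIA}$ itself is necessary (recall the remark that $<$ is not pp-definable from the basic relations of $\mathcal{DIA}$ without restricting to $\mathit{forw}$).
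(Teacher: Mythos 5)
Your proposal is correct and is exactly the argument the paper intends (it leaves the corollary without an explicit proof, but the chain of the two preceding lemmas, Corollary~\ref{cor:transfer-new} with $\bC=\mathcal{DIA}^{\rightarrow}$ and $\bD=(\mathbb{Q};<)$, and the temporal CSP dichotomy of \cite{tcsps-journal} is precisely the intended derivation, paralleling Corollary~\ref{cor:IA2}). Your closing remark about why one must pass through $\mathcal{DIA}^{\rightarrow}$ rather than $\mathcal{DIA}$ directly is also the right observation.
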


\subsection*{Acknowledgements}

We are grateful to our IJCAI reviewers for corrections to the first draft of the paper.

\end{document}